\newtheorem{thm}{Theorem}[section]
\newtheorem{lem}[thm]{Lemma}
\newtheorem{prop}[thm]{Proposition}
\newcommand{\bA}{\mathbf{A}}
\newcommand{\bB}{\mathbf{B}}
\newcommand{\bE}{\mathbf{E}}
\newcommand{\bF}{\mathbf{F}}
\newcommand{\bG}{\mathbf{G}}
\newcommand{\bH}{\mathbf{H}}
\newcommand{\bI}{\mathbf{I}}
\newcommand{\bPhi}{\mathbf{\Phi}}
\newcommand{\bPsi}{\mathbf{\Psi}}
\newcommand{\tbE}{\tilde{\mathbf{E}}}
\newcommand*\diff{\mathop{}\!\mathrm{d}}
\renewcommand{\div}{{\rm div} \,}
\newcommand{\curl}{{\rm curl} \,}
\newcommand{\beq}{\begin{equation}}
\newcommand{\eeq}{\end{equation}}
\title{Electromagnetic interior transmission eigenvalue problem for inhomogeneous media containing obstacles and its applications to near cloaking}
\author{Jingzhi Li\thanks{Faculty of Science, South University of Science and Technology of China, 518055, Shenzhen, P. R. China (lijz@sustc.edu.cn, lixf@sustc.edu.cn).} \and Xiaofei Li\footnotemark[2]  \and Hongyu Liu\thanks{\footnotesize Department of Mathematics, Hong Kong Baptist University, Kowloon Tong, Hong
Kong SAR (hongyu.liuip@gmail.com, jadelightking@qq.com).}\and Yuliang Wang\footnotemark[3] }
\begin{document}

\maketitle

\begin{abstract}
This paper is concerned with the invisibility cloaking in electromagnetic wave scattering from a new perspective. We are especially interested in achieving the invisibility cloaking by completely regular and isotropic mediums. Our study is based on an interior transmission eigenvalue problem. We propose a cloaking scheme that takes a three-layer structure including a cloaked region, a lossy layer and a cloaking shell. The target medium in the cloaked region can be arbitrary but regular, whereas the mediums in the lossy layer and the cloaking shell are both regular and isotropic. We establish that there exists an infinite set of incident waves such that the cloaking device is nearly-invisible under the corresponding wave interrogation. The set of waves is generated from the Maxwell-Herglotz approximation of the associated interior transmission eigenfunctions. We provide the mathematical design of the cloaking device and sharply quantify the cloaking performance. 
\end{abstract}

\noindent{\footnotesize{\bf Key words}. electromagnetic scattering, invisibility cloaking, interior transmission eigenvalues}


\section{Introduction}
Invisibility cloaking has received significant attentions in recent years in the scientific community due to its practical importance; see \cite{AKLL1,AKLL2,AKLL3,BL,BLZ,BPS,GKLU,GKLU2,GLU,KSVW,JL,LWZ} and the references therein for the relevant mathematical literature. The crucial idea is to coat a target object with a layer of artificially engineered material with desired properties so that the electromagnetic waves pass through the device without creating any shadow at the other end; namely, invisibility cloaking is achieved. Invisibility cloaking could find striking applications in many areas of science and technology such as radar and sonar, medical imaging, earthquake science and, energy science and engineering, to name just a few. 

Generally speaking, a region of space is said to be cloaked if its contents, together with the cloak, are invisible to a particular class of wave measurements. In the literature, most of the existing works are concerned with the design of certain artificial mechanisms of controlling wave propagation so that invisibility is achieved independent of the source of the detecting waves; that is, for any generic wave fields that one uses to impinge on the cloaking device, there will be invisibility effect produced. In this paper, we shall develop a novel cloaking scheme where the invisibility is only achieved with respect to detecting waves from a particular set. In doing so, one can achieve the invisibility cloaking by completely regular and isotropic mediums. Next, we first present the mathematical setup and then discuss the main results of the current study. 

Consider the time-harmonic electromagnetic (EM) wave scattering in a homogeneous space with the presence of an inhomogeneous scatterer. Let us first characterize the optical properties of an EM medium with the electric permittivity $\epsilon$, magnetic permeability $\mu$, and electric conductivity $\sigma$. We recall that $\mathbb{M}^{3\times 3}_{sym}$ is the space of real-valued symmetric matrices and that, for any Lipschitz domain $\Omega\subset \mathbb{R}^3$, we say that $\gamma$ is a tensor in $\Omega$ satisfying the uniform ellipticity condition if $\gamma\in L^{\infty}(\Omega;\mathbb{M}^{3\times 3}_{sym})$ and there exists $0<c_0<1$ such that
$$c_0|\xi|^2\leq \gamma(x)\xi\cdot\xi\leq c_0^{-1}|\xi|^2~\mbox{for}~ {\it a.e.}~x\in \Omega~\mbox{and every}~\xi\in\mathbb{R}^3.$$
$c_0$ shall be referred to as the ellipticity constant of the tensor $\gamma$. It is assumed that both $\epsilon(x)$ and $\mu(x)$, $x\in \mathbb{R}^3$, belong to $L^{\infty}(\Omega;\mathbb{M}^{3\times 3}_{sym})$, and are uniform elliptic with constant $c_0\in \mathbb{R}_+$; whereas it is also assumed that $\sigma\in L^{\infty}(\Omega;\mathbb{M}^{3\times 3}_{sym})$ satisfies
$$0\leq \sigma(x)\xi\cdot\xi\leq \lambda_0|\xi|^2~\mbox{for}~ {\it a.e.}~x\in \mathbb{R}^3~\mbox{and every}~\xi\in\mathbb{R}^3,$$
where $\lambda_0\in\mathbb{R}^+$. Denote $(\Omega;\epsilon,\mu,\sigma)$ the medium $\Omega$ associated with $\epsilon,\mu,\sigma$, and it is said to be regular if the material parameters fulfill the conditions described above. Moreover, $\gamma(x)$, $x\in \Omega$, is said to be isotropic if there exists $\alpha(x)\in L^{\infty}(\Omega;\mathbb{R})$ such that $\gamma(x)=\alpha(x)\cdot \bI_{3\times 3}$, where $\bI_{3\times 3}$ signifies the $3\times 3$ identity matrix. Suppose $(\Omega;\epsilon,\mu,\sigma)$ is located in an isotropic and homogeneous background/matrix medium whose material parameters are given by $$\epsilon(x)=\mathbf{I}_{3\times 3},~\mu(x)=\mathbf{I}_{3\times 3},~\sigma(x)=0,~\mbox{for}~x\in\mathbb{R}^3\backslash{\bar{\Omega}}.$$

Let $\omega\in\mathbb{R}^+$ denote an EM wavenumber, corresponding to a certain EM spectrum. Consider the EM radiation in this frequency regime in the space
$$(\mathbb{R}^3;\epsilon,\mu,\sigma)=(\Omega;\epsilon,\mu,\sigma) \wedge (\mathbb{R}^3\backslash\bar{\Omega}; \mathbf{I}_{3 \times 3}, \mathbf{I}_{3 \times 3}, 0).$$
Let $(\bE^i,\bH^i)$ be a pair of entire electric and magnetic fields, modeling the illumination source. They verify the time-harmonic Maxwell equations,
\beq
\curl \bE^i-i\omega \bH^i=0,~~\curl \bH^i+i\omega \bE^i=0~~\mbox{in}~\mathbb{R}^3.
\label{Maxwell}
\eeq
The presence of the inhomogeneous scatterer $(\Omega;\epsilon,\mu,\sigma)$ interrupts the propagation of the EM waves $\bE^i$ and $\bH^i$, leading to the so-called wave scattering. We let $\bE^s$ and $\bH^s$ denote, respectively, the scattered electric and magnetic fields. Define
\begin{equation*}
\bE:=\bE^i+\bE^s,~~\bH:=\bH^i+\bH^s,
\end{equation*}
to be the total electric and magnetic fields, respectively. Then the EM scattering is governed by the following Maxwell system
\beq
\begin{cases}
\curl\bE(x)-i\omega \mu(x)\bH(x)=0,~&x\in\mathbb{R}^3,\\
\curl\bH(x)+i\omega \epsilon(x)\bE(x)=\sigma(x)\bE(x),~&x\in\mathbb{R}^3,\\
\lim_{|x|\rightarrow +\infty}\big(\mu^{1/2}(x)\bH^s(x)\times x-|x|\epsilon ^{1/2}(x)\bE^s(x)\big)=0.
\end{cases}
\label{max}
\eeq
The last limit in (\ref{max}) is known as the Silver-M$\ddot{\mbox{u}}$ller radiation condition. The Maxwell system (\ref{max}) is well-posed and there exists a unique pair of solutions $(\bE,\bH)\in H(\mbox{curl},\mathbb{R}^3)$. Here and also in what follows, for any open set $\Omega\subset\mathbb{R}^3$, we make use of the following Sobolev spaces:
\begin{equation}
\begin{split}
H(\mbox{curl},\Omega)&:=\left \{u\in L^2(\Omega)^3|\curl u\in L^2(\Omega)^3\right \},\\
H_0(\mbox{curl},\Omega)&:=\left \{u\in H(\mbox{curl},\Omega): \nu\times u=0,~\nu\times\curl u=0~\mbox{on}~\partial\Omega\right \},\\
H^2(\mbox{curl},\Omega)&:=\left \{u\in H(\mbox{curl},\Omega):\curl u\in H(\mbox{curl},\Omega)\right \},\\
H_0^2(\mbox{curl},\Omega)&:=\left \{u\in H_0(\mbox{curl},\Omega):\curl u\in H(\mbox{curl},\Omega)\right \},
\end{split}
\label{Hcurl}
\end{equation}
endowed with the scalar product 
\begin{align*}
(u,v)_{H(\mbox{curl},\Omega)} &= (u,v)_{L^2(\Omega)}+(\curl u,\curl v)_{L^2(\Omega)}, \\ 
(u,v)_{H^2(\mbox{curl},\Omega)} &= (u,v)_{H(\mbox{curl},\Omega)}+(\curl u,\curl v)_{H(\mbox{curl},\Omega)},
\end{align*}
and the corresponding norms $\|\cdot\|_{{H(\mbox{curl},\Omega)}}$ and $\|\cdot\|_{{H^2(\mbox{curl},\Omega)}}$. 
Moreover, define
$$TH^{-1/2}_{{\rm Div}}(\partial\Omega):=\big\{ U\in TH^{-1/2}(\partial\Omega):{\rm Div}\, U\in H^{-1/2}(\partial\Omega)\big\},$$
where Div is the surface divergence operator on $\partial\Omega$, $TH^{s}(\partial\Omega)$ is the subspace of all those $V\in (H^s(\partial\Omega))^3$ which are orthogonal to $\nu$, and $H^s(\cdot)$ is the usual $L^2$-based Sobolev space of order $s\in \mathbb{R}$. 

For the solutions to (\ref{max}), we have that as $|x|\rightarrow +\infty$\cite{CK,JCN}:
$$\bE^s(x)=\frac{e^{i\omega|x|}}{|x|}\bE_{\infty}(\hat{x})+O\left(\frac{1}{|x|^2}\right),$$
$$\bH^s(x)=\frac{e^{i\omega|x|}}{|x|}\bH_{\infty}(\hat{x})+O\left(\frac{1}{|x|^2}\right),$$
where $\hat{x}:=x/|x|\in\mathbb{S}^2$, $x\in\mathbb{R}^3\backslash\{0\}$. $\bE_{\infty}$ and $\bH_{\infty}$ are, respectively, referred to as the electric and magnetic far-field patterns, and they satisfy
$$\bH_{\infty}(\hat{x})=\hat{x}\times \bE_{\infty}(\hat{x})~~\mbox{and}~~\hat{x}\cdot \bE_{\infty}(\hat{x})=\hat{x}\cdot \bH_{\infty}(\hat{x})=0,~~ \forall\,\hat{x}\in\mathbb{S}^2.$$
The medium $(\Omega;\epsilon,\mu,\sigma)$ is said to be invisible under the electromagnetic wave interrogation by $(\bE^i,\bH^i)$ if
\beq
\bE_{\infty}(\hat{x},(\bE^i,\bH^i),(\Omega;\epsilon,\mu,\sigma))\equiv 0~\mbox{and}~\bH_{\infty}(\hat{x},(\bE^i,\bH^i),(\Omega;\epsilon,\mu,\sigma))\equiv 0,~~\forall\, \hat{x}\in \mathbb{S}^2.
\label{invi}
\eeq

In the current work, we shall consider the cloaking technique in achieving the invisibility. Let $D\Subset \Omega$ be a bounded Lipschitz domain. Consider a cloaking device of the following form
\beq
(\mathbb{R}^3;\epsilon,\mu,\sigma)=(D;\epsilon_c,\mu_c,\sigma_c)\wedge(\Omega\backslash\bar{D};\epsilon_m,\mu_m,\sigma_m) \wedge (\mathbb{R}^3 \backslash \bar{\Omega}; \mathbf{I}_{3 \times 3}, \mathbf{I}_{3 \times 3}, 0),
\label{device}
\eeq
where $(D;\epsilon_c,\mu_c,\sigma_c)$ denotes the target object being cloaked, and $(\Omega\backslash\bar{D};\epsilon_m,\mu_m,\sigma_m)$ denotes the cloaking shell medium (see Figure \ref{fig:two_layer}).
\begin{figure}
  \centering
  \includegraphics[width=0.3\textwidth]{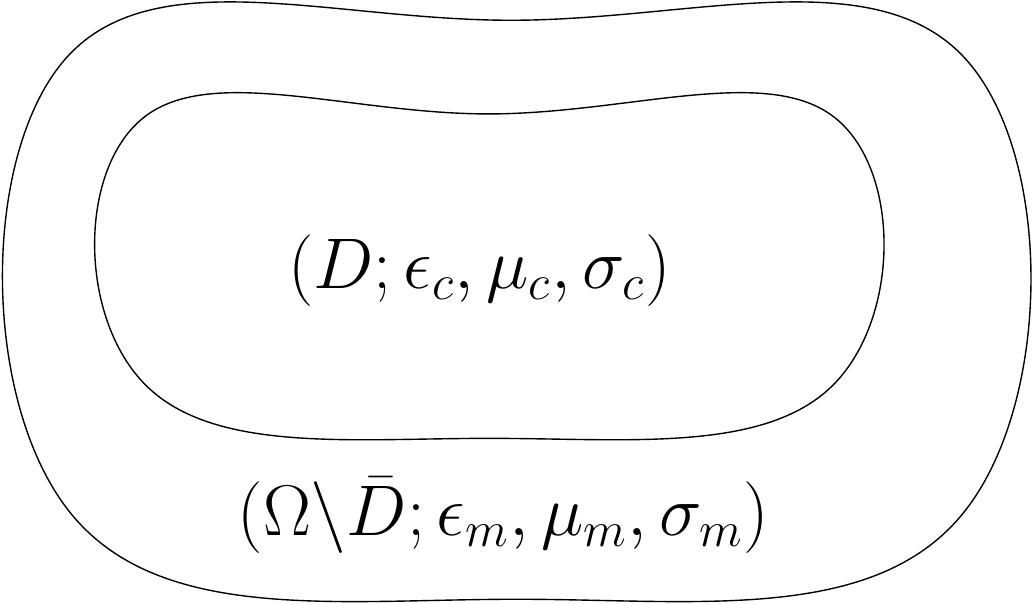}
  \caption{A two-layer cloaking device.}
  \label{fig:two_layer}
\end{figure}

For a practical cloaking device of the form (\ref{device}), there are several crucial ingredients that one should incorporate into the design:
\begin{itemize}
\item  The target object $(D;\epsilon_c,\mu_c,\sigma_c)$ can be allowed to be arbitrary (but regular). That is, the cloaking device should not be object-dependent. In what follows, this issue shall be referred to as the target independence for a cloaking device.
\item The cloaking medium $(\Omega\backslash\bar{D};\epsilon_m,\mu_m,\sigma_m)$ should be feasible for construction and fabrication. Indeed, it would be the most practically feasible if $(\Omega\backslash\bar{D};\epsilon_m,\mu_m,\sigma_m)$ is uniformly elliptic with fixed constants and isotropic as well. In what follows, this issue shall be referred to as the practical feasibility for a cloaking device.
\item  For an ideal cloaking device, one can expect the invisibility performance (\ref{invi}). However, in practice, especially in order to fulfill the above two requirements, one can relax the ideal cloaking requirement (\ref{invi}) to be
$$|\bE_{\infty}(\hat{x},(\bE^i,\bH^i),(\Omega;\epsilon,\mu,\sigma))|\ll 1~~\mbox{and}~~|\bH_{\infty}(\hat{x},(\bE^i,\bH^i),(\Omega;\epsilon,\mu,\sigma))|\ll 1,$$
$\forall\, \hat{x}\in \mathbb{S}^2$ and $\forall\, (\bE^i,\bH^i)\in \mathcal{H}$, where $\mathcal{H}$ is a set of incident waves consisting of entire solutions to the Maxwell equation (\ref{Maxwell}). That is, near-invisibility can be achieved for scattering measurements made with interrogating waves from the set $\mathcal{H}$. In what follows, this issue shall be referred to as the relaxation and approximation for a cloaking device.
\end{itemize}

In this paper, we shall develop a cloaking scheme that addresses all of the issues discussed above. Our study connects to a so-called interior transmission eigenvalue problem associated with the Maxwell system as follows, 
\begin{equation}
\begin{cases}
\curl \bE_m(x)-i\omega \mu_m\bH_m(x)=0 \quad &~\mbox{in}~ \Omega\backslash \bar{D},\\
\curl \bH_m(x)+i\omega \epsilon_m\bE_m(x)=0 \quad &~\mbox{in}~ \Omega\backslash \bar{D},\\
\curl \bE_0(x)-i\omega\bH_0(x)=0 \quad &~\mbox{in} ~\Omega,\\
\curl \bH_0(x)+i\omega\bE_0(x)=0 \quad &~\mbox{in} ~\Omega,\\
\nu\times \bE_m =0 \quad &~\mbox{on}~\partial D,\\
\nu\times \bE_m=\nu\times \bE_0,~~\nu\times\bH_m=\nu\times\bH_0 \quad &~\mbox{on} ~\partial\Omega.
\end{cases}
\label{u0}
\end{equation} 
Concerning the problem \eqref{u0}, some remarks are in order. The interior transmission eigenvalue problem is associated with an isotropic EM medium $(\Omega\backslash\bar{D};\epsilon_m,\mu_m,0)$ that contains a PEC (perfectly electric conducting) obstacle $D$. A similar interior transmission eigenvalue problem arising from the acoustic scattering governed by the Helmholtz system associated with an isotropic acoustic medium containing an impenetrable obstacle was considered in \cite{CCH}. The corresponding result was applied to the invisibility cloaking study for acoustic waves in \cite{JL}. In the current article, we shall extend those studies to the much more technical and complicated Maxwell system governing the electromagnetic scattering. We shall first prove the discreteness and existence of the interior transmission eigenvalues of the system (\ref{u0}). To our best knowledge, those results are new to literature on the study of interior transmission eigenvalue problems. Then we shall apply the obtained results to the invisibility cloaking study. The first result we can show concerning the invisibility cloaking is as follows. 
\begin{prop}
Consider the EM configuration $(\mathbb{R}^3;\epsilon,\mu,\sigma)$ in (\ref{device}) with $\sigma_m\equiv 0$ and $(D; \epsilon_c,\mu_c,\sigma_c)$ replaced to be a PEC obstacle. Let $\omega\in\mathbb{R}^+$ be an interior transmission eigenvalue associated with $(\Omega\backslash\bar{D};\epsilon_m,\mu_m,0)$, and $(\bE_m,\bH_m)$, $(\bE_0,\bH_0)$ be a corresponding pairs of eigenfunctions of (\ref{u0}). For any sufficiently small $\varepsilon\in\mathbb{R}^+$, by the denseness property of Maxwell-Herglotz functions (cf. \eqref{Herg}), there exists $(\bE^g_{\omega},\bH^g_{\omega})$ such that
\beq
\|\bE^g_{\omega}-\bE_0\|_{H({\rm curl}, \Omega)}<\varepsilon,~~\|\bH^g_{\omega}-\bH_0\|_{H({\rm curl}, \Omega)}<\varepsilon.
\label{EH}
\eeq
Consider the scattering problem (\ref{max}) by taking the incident electric and magnetic wave fields
$$\bE^i=\bE^g_{\omega},~~\bH^i=\bH^g_{\omega},$$
then there hold
\beq
|\bE_{\infty}(\hat{x},(\bE^g_{\omega},\bH^g_{\omega}),(\Omega\backslash\bar{D};\epsilon_m,\mu_m,0))|\leq C\varepsilon,~~\forall\, \hat{x}\in \mathbb{S}^{2},
\label{nearE}
\eeq
and
\beq
 |\bH_{\infty}(\hat{x},(\bE^g_{\omega},\bH^g_{\omega}),(\Omega\backslash\bar{D};\epsilon_m,\mu_m,0))|\leq C\varepsilon,~~\forall\, \hat{x}\in \mathbb{S}^{2},
\label{nearH}
\end{equation}
where $C$ is a positive constant independent of $\varepsilon$ and $(\bE^g_{\omega},\bH^g_{\omega})$.
\label{nearcloaking}
\end{prop}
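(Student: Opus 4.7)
The strategy is to exploit the interior transmission eigenvalue relations in \eqref{u0} to exhibit a ``virtual'' scattered field with vanishing far-field, corresponding to the idealized but non-entire incident choice $(\bE_0,\bH_0)$, and then to invoke continuity of the scattering map with respect to the incident wave in order to bound the actual far-field in terms of the $\varepsilon$-sized approximation error.

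Concretely, I first set
\[
(\bE^{s,0},\bH^{s,0}):=\begin{cases}(\bE_m-\bE_0,\bH_m-\bH_0),&x\in\Omega\setminus\bar D,\\(0,0),&x\in\mathbb{R}^3\setminus\bar\Omega,\end{cases}
\]
and verify directly from \eqref{u0} that $(\bE^{s,0},\bH^{s,0})$ solves the inhomogeneous Maxwell scattering problem that would be obtained if $(\bE_0,\bH_0)$ played the role of the incident wave. Inside $\Omega\setminus\bar D$, the identities
\[
\curl\bE^{s,0}-i\omega\mu_m\bH^{s,0}=i\omega(\mu_m-1)\bH_0,\quad \curl\bH^{s,0}+i\omega\epsilon_m\bE^{s,0}=-i\omega(\epsilon_m-1)\bE_0
\]
follow by one-line subtraction; the scattered-PEC relation $\nu\times\bE^{s,0}=-\nu\times\bE_0$ on $\partial D$ is immediate from $\nu\times\bE_m=0$; both tangential traces match across $\partial\Omega$ thanks to $\nu\times\bE_m=\nu\times\bE_0$ and $\nu\times\bH_m=\nu\times\bH_0$ there; and the Silver--M\"uller condition is trivial. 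Since $(\bE^{s,0},\bH^{s,0})$ vanishes identically outside $\bar\Omega$, its far-field patterns are identically zero.

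Writing $(\bE^s,\bH^s)$ for the actual scattered field associated with $(\bE^g_\omega,\bH^g_\omega)$ and setting $\delta\bE:=\bE^g_\omega-\bE_0$, $\delta\bH:=\bH^g_\omega-\bH_0$, the difference $(\bE',\bH'):=(\bE^s,\bH^s)-(\bE^{s,0},\bH^{s,0})$ then satisfies the \emph{same} Maxwell scattering system as $(\bE^{s,0},\bH^{s,0})$ but with $(\bE_0,\bH_0)$ uniformly replaced by $(\delta\bE,\delta\bH)$; note that although $\delta\bE,\delta\bH$ are a priori defined only on $\Omega$, they enter only as source terms supported in $\Omega\setminus\bar D$ and as tangential data on $\partial D\subset\Omega$. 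At the real wavenumber $\omega$ the exterior problem admits no resonances, and the coupled interior-exterior system is Fredholm---through a volume Lippmann--Schwinger equation on $\Omega\setminus\bar D$ combined with an $H^{-1/2}_{\rm Div}$ lifting of the PEC data, for example---so the standard well-posedness theory for forward Maxwell scattering yields
\[
\|\bE'\|_{H(\curl,B_R)}+\|\bH'\|_{H(\curl,B_R)}\le C\big(\|\delta\bE\|_{H(\curl,\Omega)}+\|\delta\bH\|_{H(\curl,\Omega)}\big)\le C\varepsilon
\]
on any ball $B_R\supset\bar\Omega$. Passing to the far-field via the Stratton--Chu representation bounds $|\bE'_\infty(\hat x)|,|\bH'_\infty(\hat x)|\le C\varepsilon$ uniformly in $\hat x\in\mathbb{S}^2$, and combining with $\bE^{s,0}_\infty\equiv\bH^{s,0}_\infty\equiv 0$ delivers \eqref{nearE}--\eqref{nearH}.

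The main obstacle is this final stability estimate; the construction of $(\bE^{s,0},\bH^{s,0})$ and of the perturbed system for $(\bE',\bH')$ is essentially algebraic once the matching built into \eqref{u0} is exploited. The mild subtlety is to set up the Fredholm framework for the mixed interior-exterior Maxwell problem cleanly, handling simultaneously an inhomogeneous PEC trace on $\partial D$, a transmission condition on $\partial\Omega$, and a compactly supported volume source arising from the contrasts $\mu_m-1$ and $\epsilon_m-1$. It should be emphasized that the assumption that $\omega$ is an interior transmission eigenvalue is used only to make the virtual field $(\bE^{s,0},\bH^{s,0})$ exist with vanishing far-field; it plays no role whatsoever in the well-posedness step for $(\bE',\bH')$.
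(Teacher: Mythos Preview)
Your proposal is correct and follows essentially the same approach as the paper: construct the virtual scattered field $(\bE^{s,0},\bH^{s,0})$ from the eigenfunctions, observe that its far-field vanishes, subtract from the true scattered field, and bound the difference via well-posedness of the forward scattering problem with $\varepsilon$-small data. The only cosmetic difference is that the paper reduces to the second-order $\curl\curl$ formulation for $\bE$ alone (with the single contrast $n-1=\epsilon_m-1$, since the paper has already specialized to $\mu_m=\bI_{3\times 3}$), whereas you keep the first-order system and both contrasts $\mu_m-1$, $\epsilon_m-1$; the underlying argument is the same.
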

Proposition~\ref{nearcloaking} is kind of a folk-telling result in the literature on the study of interior transmission eigenvalues, and it shall be needed in our cloaking study. 

Motivated by the two-layer cloaking device we then consider a three-layer cloaking device. Let $\Sigma\Subset D$ be bounded Lipschitz domain such that $D\backslash\bar{\Sigma}$ is connected and consider an EM medium configuration as follows (see Figure \ref{fig:three_layer}),
\beq
(\mathbb{R}^3;\epsilon,\mu,\sigma) = (\Sigma;\epsilon_a,\mu_a,\sigma_a)\wedge(D\backslash\bar{\Sigma};\epsilon_l,\mu_l,\sigma_l)\wedge(\Omega\backslash\bar{D};\epsilon_m,\mu_m,0)\wedge (\mathbb{R}^3\backslash\bar{\Omega};\bI_{3\times 3},\bI_{3\times 3},0),
\label{medium0}
\eeq
where $(\Omega\backslash\bar{D};\epsilon_m,\mu_m,0)$ is isotropic and the lossy layer $(D\backslash\bar{\Sigma};\epsilon_l,\mu_l,\sigma_l)$ is chosen to be $$\epsilon_l=\alpha_1\tau^{-1}\cdot\bI_{3\times 3},~~ \mu_l=\alpha_2\tau\cdot\bI_{3\times 3},~~ \sigma_l=\alpha_3\tau^{-1}\cdot\bI_{3\times 3},$$
where $\tau\in\mathbb{R}^+$ is an asymptotically small parameter, and $\alpha_1,\alpha_2,\alpha_3$ are constants in $\mathbb{R}^+$. The target medium $(\Sigma;\epsilon_a,\mu_a,\sigma_a)$ in the cloaked region $\Sigma$ can be arbitrary but regular. 
\begin{figure}
  \centering
  \includegraphics[width=0.35\textwidth]{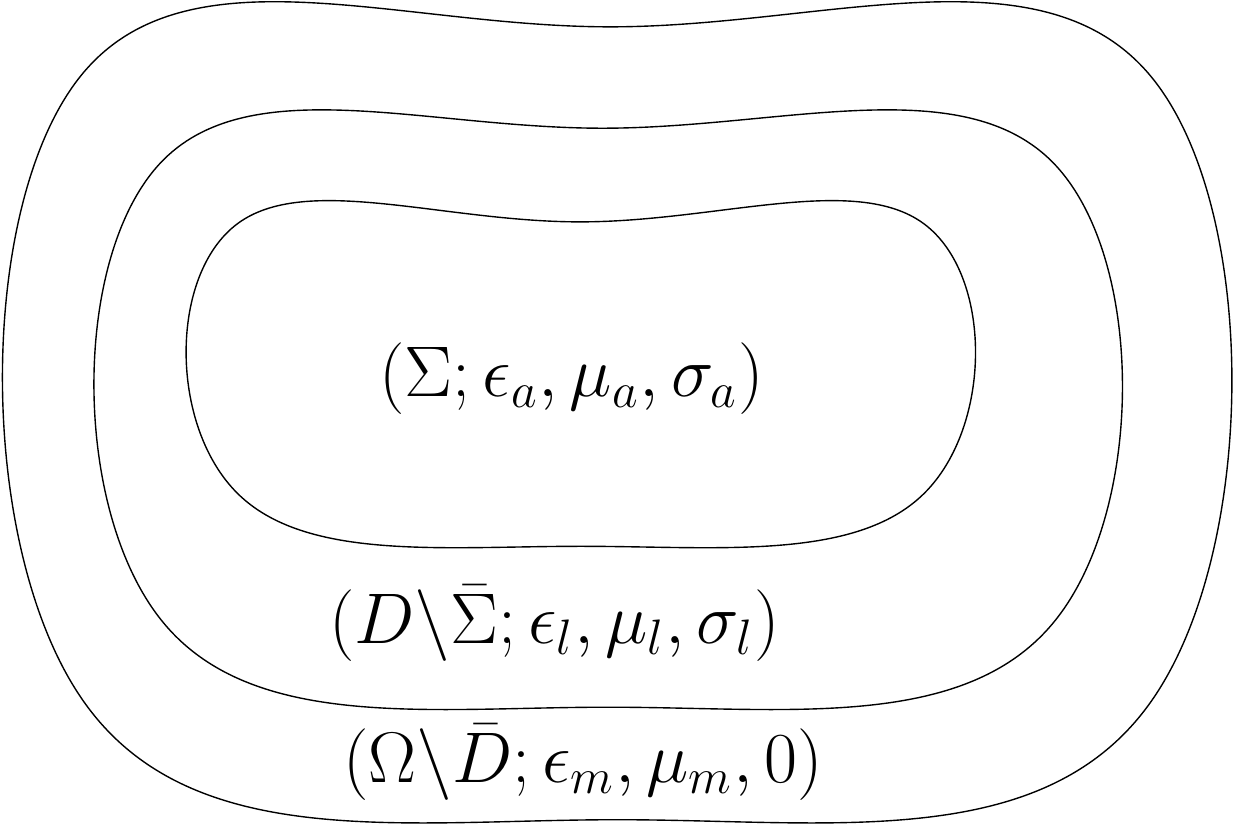}
  \caption{A three-layer cloaking device.}
  \label{fig:three_layer}
\end{figure}

For such a cloaking construction, our main theorem is as follows:
\begin{thm} \label{thm:1}
Let $(\mathbb{R}^3;\epsilon,\mu,\sigma)$ be described in (\ref{medium0}). Let $\omega$ and $(\bE^g_{\omega},\bH^g_{\omega})$ be the same as those in Proposition \ref{nearcloaking}. Consider the scattering system (\ref{max}) corresponding to $(\mathbb{R}^3;\epsilon,\mu,\sigma)$ with $\bE^i=\bE^g_{\omega},~ \bH^i=\bH^g_{\omega}$. Then $\forall\,\hat{x}\in \mathbb{S}^{2}$ we have
\begin{multline} \label{Einvi}
|\bE_{\infty}(\hat{x},(\bE^g_{\omega},\bH^g_{\omega}),(D\backslash \bar{\Sigma};\epsilon_l,\mu_l,\sigma_l) \wedge (\Omega\backslash\bar{D};\epsilon_m,\mu_m,0))| \\ \leq C\Big(\varepsilon+\tau^{1/2}\|\bE^g_{\omega}\|_{H({\rm curl},\Omega)}+\tau^{1/2}\|\bH^g_{\omega}\|_{H({\rm curl},\Omega)}\Big),
\end{multline}
\begin{multline} \label{Hinvi}
|\bH_{\infty}(\hat{x},(\bE^g_{\omega},\bH^g_{\omega}),(D\backslash \bar{\Sigma};\epsilon_l,\mu_l,\sigma_l)  \wedge (\Omega\backslash\bar{D};\epsilon_m,\mu_m,0))| \\
\leq C\Big(\varepsilon+\tau^{1/2}\|\bE^g_{\omega}\|_{H({\rm curl},\Omega)}+\tau^{1/2}\|\bH^g_{\omega}\|_{H({\rm curl},\Omega)}\Big),
\end{multline}
where $C$ is positive constant independent of $\varepsilon,\tau,(\bE^g_{\omega},\bH^g_{\omega})$ and $\epsilon_a,\mu_a,\sigma_a$.
\label{invi-cloaking}
\end{thm}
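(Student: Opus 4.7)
The strategy is to compare the three-layer scattering problem with the two-layer PEC problem of Proposition~\ref{nearcloaking}. Let $(\bE_\tau,\bH_\tau)$ denote the total field for the three-layer configuration~\eqref{medium0} driven by $(\bE^g_\omega,\bH^g_\omega)$, and let $(\tbE,\tbH)$ be the total field for the two-layer PEC problem obtained by replacing $(\Sigma;\epsilon_a,\mu_a,\sigma_a)\wedge(D\setminus\bar{\Sigma};\epsilon_l,\mu_l,\sigma_l)$ by a PEC obstacle $D$, keeping the same cloaking shell and same incident wave. Proposition~\ref{nearcloaking} already produces $|\tbE_\infty|,|\tbH_\infty|\le C\varepsilon$ on $\mathbb{S}^{2}$, so the theorem reduces to showing
\begin{equation*}
|\bE_{\tau,\infty}(\hat{x})-\tbE_\infty(\hat{x})|+|\bH_{\tau,\infty}(\hat{x})-\tbH_\infty(\hat{x})|\le C\tau^{1/2}\bigl(\|\bE^g_\omega\|_{H({\rm curl},\Omega)}+\|\bH^g_\omega\|_{H({\rm curl},\Omega)}\bigr)
\end{equation*}
uniformly in $\hat{x}\in\mathbb{S}^{2}$.

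The difference $(\bE_\tau-\tbE,\bH_\tau-\tbH)$ solves the homogeneous Maxwell system on $\mathbb{R}^3\setminus\bar D$ with coefficients $(\epsilon_m,\mu_m,0)$ in $\Omega\setminus\bar D$ and vacuum outside $\Omega$, the Silver--M\"uller radiation condition, and Dirichlet trace on $\partial D$ equal to $\nu\times\bE_\tau|_{\partial D}$ (since $\nu\times\tbE$ vanishes there by the PEC condition). By well-posedness of this exterior Maxwell problem and continuity of the far-field operator associated with the cloaking shell, the far-field pattern of the difference is controlled by $\|\nu\times\bE_\tau\|_{TH^{-1/2}_{{\rm Div}}(\partial D)}$. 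The problem therefore reduces to bounding this tangential trace by $C\tau^{1/2}$ times the $H({\rm curl},\Omega)$ norms of the incident field, uniformly with respect to the arbitrary but regular target medium inside $\Sigma$.

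To produce this bound I would work inside the lossy layer. Testing Maxwell's equations on $D\setminus\bar\Sigma$ against $(\bE_\tau,\bH_\tau)$ and applying Green's identity yields
\begin{equation*}
\int_{D\setminus\bar\Sigma}\bigl(\sigma_l|\bE_\tau|^2+i\omega\epsilon_l|\bE_\tau|^2-i\omega\mu_l|\bH_\tau|^2\bigr)\diff x=\int_{\partial D}(\nu\times\bH_\tau)\cdot\overline{\bE_\tau}\,\diff s-\int_{\partial\Sigma}(\nu\times\bH_\tau)\cdot\overline{\bE_\tau}\,\diff s.
\end{equation*}
Its real part with $\sigma_l=\alpha_3\tau^{-1}\bI_{3\times 3}$ gives $\tau^{-1}\|\bE_\tau\|_{L^2(D\setminus\bar\Sigma)}^2\le C|\mathrm{bdry}|$, hence $\|\bE_\tau\|_{L^2(D\setminus\bar\Sigma)}=O(\tau^{1/2})$; its imaginary part with $\epsilon_l=\alpha_1\tau^{-1}\bI_{3\times 3}$, $\mu_l=\alpha_2\tau\bI_{3\times 3}$ combined with the previous bound gives $\tau\|\bH_\tau\|_{L^2(D\setminus\bar\Sigma)}^2\le C(1+|\mathrm{bdry}|)$, hence $\|\bH_\tau\|_{L^2(D\setminus\bar\Sigma)}=O(\tau^{-1/2})$. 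Substituting into $\curl\bE_\tau=i\omega\mu_l\bH_\tau$ produces $\|\curl\bE_\tau\|_{L^2(D\setminus\bar\Sigma)}=O(\tau^{1/2})$, so the tangential trace theorem on $H({\rm curl},D\setminus\bar\Sigma)$ delivers $\|\nu\times\bE_\tau\|_{TH^{-1/2}_{{\rm Div}}(\partial D)}=O(\tau^{1/2})$. The particular scaling $\epsilon_l\sim\tau^{-1}$, $\mu_l\sim\tau$, $\sigma_l\sim\tau^{-1}$ is evidently chosen precisely so that this chain of estimates balances.

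The principal obstacle is closing the chain: the right-hand side of the energy identity contains $\bE_\tau,\bH_\tau$ on $\partial D$ and $\partial\Sigma$, so to treat $|\mathrm{bdry}|$ as an $O(1)$ quantity one needs independent stability bounds. The $\partial D$ trace should be controlled by a $\tau$-uniform a priori estimate for the full scattering problem restricted to the cloaking shell $\Omega\setminus\bar D$: since the lossy layer only strengthens dissipation and away from it the system is a regular elliptic Maxwell transmission problem on a fixed geometry, the scattered field is bounded in $H({\rm curl},\Omega\setminus\bar D)$ by the incident Herglotz field. The $\partial\Sigma$ trace is handled by the interior Dirichlet-to-Neumann map of $(\Sigma;\epsilon_a,\mu_a,\sigma_a)$, whose operator norm depends only on the common ellipticity constant $c_0$; this is exactly what forces the final constant $C$ in~\eqref{Einvi}--\eqref{Hinvi} to be independent of the target parameters. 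A short bootstrap between these two stability bounds and the interior lossy-layer estimate then closes the argument and, in combination with Proposition~\ref{nearcloaking}, yields~\eqref{Einvi}--\eqref{Hinvi}.
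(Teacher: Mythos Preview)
Your overall strategy---compare the three-layer problem to the PEC problem of Proposition~\ref{nearcloaking} and reduce to estimating $\|\nu\times\bE_\tau\|_{TH^{-1/2}_{\rm Div}(\partial D)}$---is exactly the paper's. The difference, and the source of a genuine gap, lies in how you try to obtain that trace estimate.

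You run the energy identity on $D\setminus\bar\Sigma$, which produces boundary terms on both $\partial D$ and $\partial\Sigma$. For the real part the $\partial\Sigma$ contribution has the right sign (it equals $-\int_\Sigma\sigma_a|\bE_\tau|^2$ and can be dropped), but for the imaginary part it becomes $\omega\int_\Sigma(\mu_a|\bH_\tau|^2-\epsilon_a|\bE_\tau|^2)$, which has no sign and depends on $(\epsilon_a,\mu_a)$. Your proposed cure---invoking the interior Dirichlet-to-Neumann map of $(\Sigma;\epsilon_a,\mu_a,\sigma_a)$ with norm depending only on the ellipticity constant---is not correct: that map need not even exist when $\sigma_a=0$ and $\omega$ hits an interior Maxwell eigenvalue for $\Sigma$, and in any case its norm is not controlled by $c_0$ alone. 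So the claimed independence of $C$ from $(\epsilon_a,\mu_a,\sigma_a)$ is not established. The second hand-wave, a ``$\tau$-uniform a priori estimate for the full scattering problem'' to control the $\partial D$ term, is essentially what you are trying to prove; the coefficients in the lossy layer scale like $\tau^{\pm1}$, so standard well-posedness does not furnish a uniform constant.

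The paper sidesteps both difficulties by two devices you should note. First, it integrates the energy identity over all of $\Omega$ rather than over $D\setminus\bar\Sigma$; the only boundary term is then on $\partial\Omega$, where $\bE=\bE^g_\omega+\bE^{s,\omega}$ and $\nu\times\bH^{s,\omega}=\Lambda^{e,\omega}_\Omega(\nu\times\bE^{s,\omega})$, giving $\|\bE_\tau\|_{L^2(D\setminus\bar\Sigma)}^2\le C\tau(\text{incident}+\|\nu\times\bE^{s,\omega}\|_{\partial\Omega})$ with no reference to $\partial D$ or $\partial\Sigma$. Second, to pass from this $L^2$ bound to the $TH^{-1/2}_{\rm Div}(\partial D)$ trace, the paper does \emph{not} use the imaginary part and the trace theorem; instead it exploits the $\tau$-independent relation $\curl\curl\bE_\tau=i\omega\alpha_2(\alpha_3-i\omega\alpha_1)\bE_\tau$ in $D\setminus\bar\Sigma$ together with a duality test function $\bF\in H^2(\mathrm{curl},D)$ satisfying $\nu\times\bF=0$ on $\partial D$ and $\bF\equiv0$ in $\Sigma$, so the target medium never enters. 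The resulting inequality $\|\nu\times\bE_\tau\|_{\partial D}\le C\tau^{1/2}(\|\bE^g_\omega\|+\|\bH^g_\omega\|+\|\nu\times\bE^{s,\omega}\|_{\partial\Omega})$ is then closed by the same PEC comparison you propose, and the $\tau^{1/2}$ factor lets one absorb the last term for small $\tau$.
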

By Theorem~\ref{invi-cloaking}, the cloaking layer $(D\backslash\bar{\Sigma};\epsilon_l,\mu_l,\sigma_l)\wedge(\Omega\backslash\bar{D};\epsilon_m,\mu_m,0)$ makes an arbitrary object located in the cloaked region $\Sigma$ nearly invisible to the wave interrogation by $(\bE^g_{\omega},\bH^g_{\omega})$.

The rest of the paper is organized as follows. In section 2, we introduce the interior transmission eigenvalue problem for inhomogeneous media containing obstacles, and derive its variational form. In section 3, we investigate the spectral property of the interior transmission eigenvalue problem. We prove the discreteness and the existence of the interior transmission eigenvalues. In section 4, we introduce the Maxwell-Herglotz approximation. In section 5, we consider the isotropic invisibility cloaking, and establish the near-invisibility results. This paper is ended with a short discussion.
\section{Interior transmission eigenvalue problem for inhomogeneous media containing obstacles}
\subsection{Physical background of the interior transmission eigenvalue problem}

Let us consider the EM configuration \eqref{device}. We assume that $\sigma_m=0$ and first consider the case that $(D; \epsilon_c,\mu_c,\sigma_c)$ is replaced by a PEC obstacle. Then the scattering problem is described by the following Maxwell system for $(\bE,\bH)\in H(\mbox{curl},\mathbb{R}^3\backslash\bar{D})$
\begin{equation}
\begin{cases}
\curl \bE(x)-i\omega \mu_m\bH(x)=0 \quad &~\mbox{in}~ \Omega\backslash \bar{D},\\
\curl \bH(x)+i\omega \epsilon_m\bE(x)=0 \quad &~\mbox{in}~ \Omega\backslash \bar{D},\\
\curl \bE(x)-i\omega\bH(x)=0 \quad &~\mbox{in} ~\mathbb{R}^3\backslash\bar{\Omega},\\
\curl \bH(x)+i\omega\bE(x)=0 \quad &~\mbox{in} ~\mathbb{R}^3\backslash\bar{\Omega},\\
\nu\times \bE =0 \quad &~\mbox{on}~\partial D,\\
\bE-\bE^i,\bH-\bH^i\mbox{ satisfy the Silver-M$\ddot{\mbox{u}}$ller radiation condition},
\end{cases}
\label{R0}
\end{equation}
where $\nu$ is the unit outward normal vector to boundary $\partial D$. By straightforward manipulations, we can also write the Maxwell system (\ref{R0}) in terms of $\bE\in H^2(\mbox{curl},\mathbb{R}^3\backslash\bar{D})$ as follows,
\begin{equation}
\begin{cases}
\curl\curl \bE(x)-\omega^2 n(x) \bE(x)=0 \quad &~\mbox{in}~ \Omega\backslash \bar{D},\\
\curl\curl \bE(x)-\omega^2 \bE(x)=0 \quad &~\mbox{in} ~\mathbb{R}^3\backslash\bar{\Omega},\\
\nu\times \bE(x) =0 \quad &~\mbox{on}~\partial D,\\
\bE-\bE^i\mbox{ satisfies the Silver-M$\ddot{\mbox{u}}$ller radiation condition},
\end{cases}
\label{R}
\end{equation}
where $n(x):=\mu_m(x)\epsilon_m(x)>0$. Let $\bE^s=\bE-\bE^i$ and $\bE^{\infty}$ signify the corresponding scattered wave field and the far-field pattern, respectively. If perfect invisibility is achieved for the scattering system (\ref{R}), namely, $\bE_{\infty}(\hat{x},(\bE^i,\bH^i),(\Omega;\epsilon,\mu,\sigma))\equiv 0,$ then one has that
\beq
\bE^s(x)=0~~\mbox{for}~x\in \mathbb{R}^3\backslash \bar{\Omega}.
\label{es}
\eeq
By using the standard transmission conditions across $\partial \Omega$ for the solution $\bE$ and $\bH$, one has
\beq
\nu\times \bE|_-=\nu\times \bE|_+~~\mbox{and}~~\nu\times \bH|_-=\nu\times\bH|_+~~\mbox{on}~\partial\Omega,
\label{tran}
\eeq
where
$$\nu\times\bE|_{\pm}(x):=\lim_{h\rightarrow +0}\nu\times\bE(x\pm h\nu(x)),~~x\in\partial\Omega,$$
$$\nu\times\bH|_{\pm}(x):=\lim_{h\rightarrow +0}\nu\times\bH(x\pm h\nu(x)),~~x\in\partial\Omega.$$
The second condition in (\ref{tran}) can also be written as 
$$\nu\times \mu_m^{-1} \curl \bE|_-=\nu\times\curl \bE|_+~~\mbox{on}~\partial\Omega.$$
We assume that the the magnetic permeability $\mu_m=\mathbf{I}_{3 \times 3}$ in $\Omega\backslash\bar{D}$. The above condition becomes
$$\nu\times\curl \bE|_-=\nu\times\curl \bE|_+~~\mbox{on}~\partial\Omega.$$

Applying (\ref{es}) to (\ref{tran}) and by setting $\bE_m(x):=\bE(x)$ for $x\in \Omega\backslash\bar{D}$ and $\bE_0(x):=\bE^i(x)$ for $x\in\Omega$, one can readily show that if perfect invisibility is achieved, then the following interior transmission eigenvalue problem
\begin{equation}
\begin{cases}
\curl\curl \bE_m-\omega^2n(x) \bE_m=0 \quad &~\mbox{in}~ \Omega\backslash \bar{D},\\
\curl\curl \bE_0-\omega^2 \bE_0=0 \quad &~\mbox{in} ~\Omega,\\
\nu\times \bE_m =0 \quad &~\mbox{on}~\partial D,\\
\nu\times \bE_m=\nu\times \bE_0,~~\nu\times\curl\bE_m=\nu\times\curl\bE_0 \quad &~\mbox{on} ~\partial\Omega
\end{cases}
\label{u}
\end{equation} 
has nontrivial solutions, where $n(x)=\epsilon_m$. Note that the interior transmission eigenvalue problem (\ref{u}) is non-self adjoint. The equivalent interior transmission eigenvalue system of (\ref{u}) for the pair $(\bE,\bH)$ is (\ref{u0}).


\subsection{Variational formulation of the interior transmission eigenvalue problem}
In this subsection we are going to derive the variational form of the interior transmission problem (\ref{u}).

Let $\tilde{\bE}=:\bE_m-\bE_0$ in $\Omega\backslash \bar{D}$. By the first and second identities in (\ref{u}) we have that $\tbE$ satisfies

\begin{align}  \label{eq:1}
\curl\curl\tbE-\omega^2n\tbE=\omega^2(n-1)\bE_0~\mbox{in}~\Omega\backslash \bar{D}.
\end{align}

We also get the boundary conditions
$$\nu\times \tbE =0,~~\nu\times\curl \tbE=0~\mbox{on}~\partial\Omega,$$
and
$$
\nu\times \tbE=-\nu\times \bE_0~\mbox{on}~ \partial D.
$$
Then the interior transmission problem (\ref{u}) can be reformulated in terms of $\tbE$ and $\bE_0$ as follows:
\begin{equation*}
\begin{cases}
\curl\curl\tbE-\omega^2n \tbE=\omega^2(n-1)\bE_0 \quad &~\mbox{in}~ \Omega\backslash \bar{D},\\
\curl\curl\bE_0-\omega^2 \bE_0=0 \quad &~\mbox{in} ~\Omega,\\
\nu\times \tbE=-\nu\times \bE_0 \quad &~\mbox{on}~\partial D,\\
\nu\times \tbE=0,~\nu\times \curl\tbE=0 \quad &~\mbox{on} ~\partial\Omega,
\end{cases}
\end{equation*}
With continuity of the data $\nu\times \bE_0$ and $\nu\times\curl\bE_0$ across $\partial D$, we have from (\ref{eq:1}) that
\beq
\nu\times\Big(\frac{1}{\omega^2}(n-1)^{-1}(\mbox{curl curl}-\omega^2n) \tbE\Big)\Big|_+=\nu\times \bE_0|_- ~~\mbox{on}~ \partial D,
\label{e1}
\eeq
and
\beq \nu\times\curl\Big(\frac{1}{\omega^2}(n-1)^{-1}(\mbox{curl curl}-\omega^2n) \tbE\Big)\Big|_+=\nu\times \curl\bE_0|_{-} ~~\mbox{on}~ \partial D.
\label{e2}
\eeq

Multiplying both sides of (\ref{eq:1}) by $(n-1)^{-1}$ and applying operator $(\mbox{curl curl} -\omega^2)$ on both sides, we get a fourth order equation for $\tbE$ in $\Omega\backslash \bar{D}$:
$$
(\mbox{curl curl}-\omega^2)(n-1)^{-1}(\mbox{curl curl}-\omega^2n)\tbE=0~\mbox{in}~\Omega\backslash \bar{D}.
$$
Note that $\tbE$ is only defined in $\Omega\backslash \bar{D}$. We define
\begin{equation*}
\bE=
\begin{cases}
\tbE~~ &\mbox{in}~\Omega\backslash \bar{D},\\
-\bE_0~~&\mbox{in}~D.
\end{cases}
\label{E}
\end{equation*}
Then $\bE\in W$, where $W$ is defined by
$$W:=\left \{\bE \in H_0(\mbox{curl},\Omega): \curl\curl\bE|_{\Omega\backslash\bar{D}}\in L^2(\Omega\backslash \bar{D})\right \},$$
which is equipped with the norm
\begin{equation}
\|\bE\|^2_W=\|\bE\|^2_{H(\mbox{curl},\Omega)}+\|\curl\curl\bE\|^2_{L^2(\Omega\backslash \bar{D})}.
\label{Wnorm}
\end{equation}
Now we are able to formulate a fourth order system of $\bE\in W$ as follows,
\begin{equation}
\begin{cases}
(\mbox{curl curl} -\omega^2)(n-1)^{-1}(\mbox{curl curl}-\omega^2n)\bE=0 \quad &~\mbox{in}~ \Omega\backslash \bar{D},\\
\curl\curl\bE-\omega^2 \bE=0 \quad &~\mbox{in} ~D,\\
\nu\times \bE|_+=\nu\times \bE|_- \quad &~\mbox{on}~\partial D,\\
\nu\times\left(\dfrac{1}{\omega^2}(n-1)^{-1}(\mbox{curl curl}-\omega^2n) \bE\right)\Big|_+=-\nu\times \bE|_- \quad &~\mbox{on}~\partial D,\\
\nu\times\curl\left(\dfrac{1}{\omega^2}(n-1)^{-1}(\mbox{curl curl}-\omega^2n) \bE\right)\Big|_+=-\nu\times \curl\bE|_- \quad &~\mbox{on}~\partial D,\\
\nu\times \bE=0,~\nu\times \curl\bE=0 \quad &~\mbox{on} ~\partial\Omega.
\end{cases}
\label{u2}
\end{equation}

Take a test vector function $\bPhi\in W$. Multiplying the first equation in (\ref{u2}) by $\bar{\bPhi}$ we have
\beq
0=\int_{\Omega\backslash \bar{D}}(\mbox{curl curl} -\omega^2)(n-1)^{-1}(\mbox{curl curl}-\omega^2n)\bE\cdot \bar{\bPhi}\diff x.
\label{Ep}
\eeq
Denote
\beq
\bPsi=(n-1)^{-1}(\mbox{curl curl}-\omega^2n)\bE~~\mbox{in} ~\Omega\backslash \bar{D}.
\label{ksi}
\eeq

With the aid of the vector identity
\beq
\curl\curl\bE=-\Delta \bE+\nabla\div\bE,
\label{ccl}
\eeq
by Green's second vector theorem, together with the boundary conditions in (\ref{u2}), equation (\ref{Ep}) becomes
\beq
\begin{split}
0&=\int_{\Omega\backslash \bar{D}}\bPsi\cdot\overline{(\mbox{curl curl} -\omega^2)\bPhi} \diff x-\int_{\partial D}\big(\nu\times\bPsi\cdot\overline{\curl\bPhi}-\curl\bPsi\cdot\overline{\nu\times\bPhi}\big)\diff s.
\end{split}
\label{u4}
\eeq

By (\ref{ksi}), the fourth and fifth transmission conditions in (\ref{u2}) can be written in terms of
$$\nu\times\bPsi|_+=-\omega^2\nu\times \bE|_-,~~\nu\times\curl\bPsi|_+=-\omega^2\nu\times \curl\bE|_-~~\mbox{on}~\partial D.$$
Use the above boundary conditions, (\ref{u4}) becomes
\begin{multline*} 
0=\int_{\Omega\backslash \bar{D}}(n-1)^{-1}(\curl\curl\bE-\omega^2\bE)\cdot \overline{(\curl\curl\bPhi -\omega^2\bPhi})\diff x-\int_{\Omega\backslash \bar{D}}\omega^2 \bE\cdot  \overline{\curl\curl\bPhi}\diff x\\
+\int_{\Omega\backslash \bar{D}}\omega^4\bE\cdot \bar{\bPhi}\diff x+\omega^2\int_{\partial D}\big(\nu\times \bE|_- \cdot \overline{\curl\bPhi}+\nu\times\curl\bE|_-\cdot\bar{\bPhi}\big)\diff s.
\end{multline*}
By Green's first vector theorem, and with the aid of the vector identity (\ref{ccl}), we obtain
\begin{multline*}
0 =\int_{\Omega\backslash \bar{D}}(n-1)^{-1}(\curl\curl\bE-\omega^2\bE)\cdot \overline{(\curl\curl\bPhi -\omega^2\bPhi})\diff x \\
-\omega^2\int_{\Omega}\curl\bE\cdot \overline{\curl\bPhi}\diff x+\omega^4\int_{\Omega}\bE\cdot \bar{\bPhi}\diff x.
\end{multline*}

Therefore, the variational formulation of the interior transmission problem (\ref{u2}) becomes:
find $\bE\in W$ such that
\begin{multline}
\int_{\Omega\backslash \bar{D}}(n-1)^{-1}(\curl\curl\bE-\omega^2\bE)\cdot \overline{(\curl\curl\bPhi -\omega^2\bPhi})\diff x \\
-\omega^2\int_{\Omega}\curl\bE\cdot \overline{\curl\bPhi}\diff x+\omega^4\int_{\Omega}\bE\cdot \bar{\bPhi}\diff x=0,
\label{var}
\end{multline}
for all $\bPhi\in W$. By taking appropriate test function it is easy to see that a solution of the variational problem (\ref{var}) defines a week solution to (\ref{u2}) and therefore to the interior transmission problem (\ref{u}).

\section{Spectral property of the interior transmission eigenvalue problem}
In this section, we investigate the spectral property of the interior transmission eigenvalue problem (\ref{u}). We prove the discreteness and the existence of the interior transmission eigenvalues by considering $n-1<0$ in $\Omega\backslash\bar{D}$ and $n-1>0$ in $\Omega\backslash\bar{D}$, respectively.
\subsection{Discreteness of the spectrum when $n-1<0$}
\begin{thm}
Assume that $n(x)-1<0$ in $\Omega\backslash\bar{D}$. And moreover, $n(x)=\epsilon_m(x)$ satisfies $\|\nabla n(x)/n(x)\|_{L^{\infty}(\Omega\backslash \bar{D})}\ll 1$. Then the set of transmission eigenvalues is discrete.
\label{negative}
\end{thm}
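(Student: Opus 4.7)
The plan is to recast the variational formulation \eqref{var} as an analytic operator-valued family in $k=\omega^2$ and invoke the analytic Fredholm theorem. Since $n-1<0$ makes $(1-n)^{-1}>0$ on $\Omega\setminus\bar D$, multiplying \eqref{var} by $-1$ gives the equivalent problem: find $\bE\in W$ with
\begin{equation*}
\mathcal{F}_k(\bE,\bPhi):=\int_{\Omega\setminus\bar D}\frac{(\curl\curl\bE-k\bE)\cdot\overline{(\curl\curl\bPhi-k\bPhi)}}{1-n}\,\diff x+k\!\int_\Omega\!\curl\bE\cdot\overline{\curl\bPhi}\,\diff x-k^2\!\int_\Omega\!\bE\cdot\bar{\bPhi}\,\diff x=0
\end{equation*}
for every $\bPhi\in W$. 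I would split $\mathcal{F}_k=\mathcal{A}_k-\mathcal{B}_k$ by placing the first two terms together with $+k^2\int_\Omega\bE\cdot\bar{\bPhi}\,\diff x$ into $\mathcal{A}_k$ and setting $\mathcal{B}_k(\bE,\bPhi)=2k^2\int_\Omega\bE\cdot\bar{\bPhi}\,\diff x$, so that via Riesz representation the problem reads $(A_k-B_k)\bE=0$ with coefficients polynomial in $k$.

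The core of the argument is $W$-coercivity of $\mathcal{A}_k$. Expanding the square on the diagonal,
\begin{multline*}
\mathcal{A}_k(\bE,\bE)=\int_{\Omega\setminus\bar D}(1-n)^{-1}|\curl\curl\bE|^2\,\diff x-2k\,\Re\!\int_{\Omega\setminus\bar D}(1-n)^{-1}\curl\curl\bE\cdot\bar{\bE}\,\diff x\\
+k^2\int_{\Omega\setminus\bar D}(1-n)^{-1}|\bE|^2\,\diff x+k\|\curl\bE\|_{L^2(\Omega)}^2+k^2\|\bE\|_{L^2(\Omega)}^2,
\end{multline*}
I would then integrate by parts the cross term, using $\nu\times\bE=\nu\times\curl\bE=0$ on $\partial\Omega$, to rewrite it as $-2k\!\int_{\Omega\setminus\bar D}(1-n)^{-1}|\curl\bE|^2\,\diff x$ minus a residual $2k\!\int_{\Omega\setminus\bar D}\curl\bE\cdot(\nabla(1-n)^{-1}\!\times\!\bar{\bE})\,\diff x$ together with a boundary term on $\partial D$ that can be paired against the $k\int_D\curl\bE\cdot\overline{\curl\bE}\,\diff x$ piece inside $k\|\curl\bE\|_{L^2(\Omega)}^2$. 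Since $|\nabla(1-n)^{-1}|\le C\|\nabla n/n\|_{L^\infty(\Omega\setminus\bar D)}$, the residual is dominated by $C\|\nabla n/n\|_{L^\infty}(\|\curl\bE\|_{L^2}^2+\|\bE\|_{L^2}^2)$, and the hypothesis $\|\nabla n/n\|_{L^\infty}\ll 1$ is exactly what lets it be absorbed into the positive remaining terms, yielding $\mathcal{A}_k(\bE,\bE)\gtrsim\|\bE\|_W^2$ for an appropriate range of $k$ (in particular for large purely imaginary $k$).

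For compactness of $\mathcal{B}_k$, the obstruction is that $W\hookrightarrow L^2(\Omega)$ is not compact in general; I would use a Helmholtz decomposition $\bE=\bE_0+\nabla p$ compatible with the boundary conditions built into $H_0(\curl,\Omega)$, apply the Weber--Weck compactness theorem to the solenoidal component, and dispose of the gradient part by $L^2$-orthogonality. Once $\mathcal{A}_k$ is coercive and $\mathcal{B}_k$ compact, analytic Fredholm theory applied to the family $k\mapsto I-A_k^{-1}B_k$ implies that either it is nowhere invertible or the set of $k\in\mathbb{C}$ at which it has nontrivial kernel is discrete; invertibility at one $k_0$ on the imaginary axis, checked by separating real and imaginary parts of $\mathcal{F}_{k_0}(\bE,\bE)=0$ to force $\bE\equiv 0$, rules out the first alternative and delivers discreteness of $\omega\in\mathbb{R}^+$. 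The hard part is the coercivity step: controlling all three components of $\|\cdot\|_W$ from the mixed quadratic expression for $\mathcal{A}_k$ rests on the integration-by-parts identity above, and the smallness of $\|\nabla n/n\|_{L^\infty}$ is precisely what keeps the coefficient-variation term from spoiling the sign of the residual.
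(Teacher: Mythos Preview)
Your overall architecture---analytic Fredholm applied to $A_k-B_k$ with the same splitting as the paper---is sound, but the coercivity step contains a genuine gap. After your integration by parts the cross term becomes (up to residuals) $-2k\int_{\Omega\setminus\bar D}(1-n)^{-1}|\curl\bE|^2\,\diff x$; since $(1-n)^{-1}>1$ on $\Omega\setminus\bar D$, this negative contribution overwhelms the positive term $+k\|\curl\bE\|_{L^2(\Omega)}^2$ for every real $k>0$, so the expression is \emph{not} bounded below by $c\|\bE\|_W^2$. The proposed pairing of the $\partial D$ boundary term against $k\int_D|\curl\bE|^2$ is also illegitimate in $W$: membership in $W$ only guarantees $\curl\curl\bE\in L^2(\Omega\setminus\bar D)$, not in $D$, so you cannot integrate by parts in $D$ to generate a matching boundary contribution, and even formally the coefficients $(1-n)^{-1}$ (outer side) and $1$ (inner side) do not cancel.

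The paper avoids this entirely: coercivity of $\mathcal{A}_\omega$ is obtained by the purely algebraic identity $\gamma X^2-2\gamma XY+(1+\gamma)Y^2=\delta(Y-\gamma X/\delta)^2+(\gamma-\gamma^2/\delta)X^2+(1+\gamma-\delta)Y^2$ applied with $X=\|\curl\curl\bE\|_{L^2(\Omega\setminus\bar D)}$ and $Y=\omega^2\|\bE\|_{L^2(\Omega\setminus\bar D)}$, and needs \emph{no} hypothesis on $\nabla n$. The smallness assumption $\|\nabla n/n\|_{L^\infty}\ll 1$ enters instead in two other places: (i) compactness of $\bB$, via the embedding $W\subset H(\curl,\div n;\Omega)\hookrightarrow L^2(\Omega)$; and (ii) injectivity of $A_\omega-\omega^4 B$ at a small real $\omega$, where the Poincar\'e inequality is combined with the constraint $\div(n\bE_m)=0$ to bound $\|\div\bE\|_{L^2}\le\|\nabla n/n\|_{L^\infty}\|\bE_m\|_{L^2}$. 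Your Helmholtz-decomposition route to compactness is also incomplete as stated: $\mathcal{B}_k$ is the full $L^2$ pairing and does not annihilate gradients, and nothing in the bare definition of $W$ controls $\div\bE$---some divergence information (as the paper extracts from the structure of the eigenproblem) is required.
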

\begin{proof}
Suppose $n(x)-1<0$ in $\Omega\backslash\bar{D}$. We define two sesquilinear forms on $W\times W$:
\begin{multline*}
\mathcal{A}_\omega(\bE,\bPhi)=-\int_{\Omega\backslash \bar{D}}(n-1)^{-1}(\curl\curl\bE-\omega^2\bE)\cdot \overline{(\curl\curl\bPhi -\omega^2\bPhi})\diff x\\
+\omega^2\int_{\Omega}\curl\bE\cdot \overline{\curl\bPhi}\diff x+\omega^4\int_{\Omega}\bE\cdot \bar{\bPhi}\diff x
\end{multline*}
and
$$\mathcal{B}(\bE,\bPhi)=2\int_{\Omega}\bE\cdot \bar{\bPhi}\diff x,$$
where $W$ is defined by (\ref{Wnorm}). Then the variational form (\ref{var}) of the interior transmission is equivalent to
$$ \mbox{Find }~\bE\in W~ \mbox{such that}$$
$$\mathcal{A}_\omega(\bE,\bPhi)-\omega^4\mathcal{B}(\bE,\bPhi)=0~~\mbox{for all}~\bPhi\in W.$$
By the Riesz representation theorem there exist two bounded linear operators $\mathbf{A}_\omega:W\rightarrow W$ and $\bB:W\rightarrow W$ such that
$$(\bA_\omega \bE,\bPhi)_W:=\mathcal{A}_\omega(\bE,\bPhi)~\mbox{and}~(\bB \bE,\bPhi)_W:=\mathcal{B}(\bE,\bPhi).$$

By Lemma \ref{Acoercive} and \ref{Bcompact} in the sequel we have $\bA_\omega$ is coercive and $\bB$ is compact. Hence the operator $\bA_\omega -\omega ^4\bB $ is Fredholm with index zero. The transmission eigenvalues are the values of $\omega >0$ for which $\bI-\omega ^4\bA_\omega ^{-1}\bB $ has a nontrivial kernel. To apply the analytic Fredholm theorem, it remains to show that $\bI-\omega ^4\bA_\omega ^{-1}\bB $ or $\bA_\omega -\omega ^4\bB $ is injective for at least one $\omega $.

For all $\bE\in W$ we have that
\begin{multline}\label{AB}
\mathcal{A}_\omega (\bE,\bE)-\omega ^4\mathcal{B} (\bE,\bE)\\
=\int_{\Omega\backslash \bar{D}}(1-n)^{-1}|\curl\curl \bE-\omega ^2\bE|^2\diff x+\omega ^2\|\curl \bE\|^2_{L^2(\Omega)}-\omega^4\|\bE\|^2_{L^2(\Omega)}.
\end{multline}
The Poincar\'{e} inequality gives us that:
\begin{equation}
\|\bE\|^2_{L^2(\Omega)}\leq C\left(\|\curl \bE\|^2_{L^2(\Omega)}+\|\div \bE\|^2_{L^2(\Omega)}\right),
\label{poincare}
\end{equation}
for all $\bE$ satisfies $\nu\times \bE=0$ on $\partial\Omega$, where constant $C$ is independent of $\bE$ (see \cite{Monk}).

We observe from (\ref{u}) that $\div \bE=0$ in $D$, then, for all $\bE\in W$ we have from (\ref{AB}) that
\beq
\mathcal{A}_\omega (\bE,\bE)-\omega ^4\mathcal{B}(\bE,\bE)\geq\omega ^2(1-\omega^2C)\|\curl \bE\|^2_{L^2(\Omega)}-\omega ^4C\|\div\bE\|^2_{L^2(\Omega\backslash \bar{D})}.
\label{curlo}
\eeq
Since $\bE=\bE_m-\bE_0$ and $\div\bE_0=0$ in $\Omega\backslash \bar{D}$, we have $\div \bE=\div \bE_m$. From (\ref{u}) we can see that $\div (n\bE_m)=0$ in $\Omega\backslash \bar{D}$. Due to the identity $\div  (n\bE_m)=n(\div \bE_m)+\nabla n\cdot\bE_m$ in $\Omega\backslash \bar{D}$, we have
$$\|\div \bE\|_{L^2(\Omega\backslash \bar{D})}=\|\div \bE_m\|_{L^2(\Omega\backslash \bar{D})}\leq \|\nabla n(x)/n(x)\|_{L^{\infty}(\Omega\backslash \bar{D})}\|\bE_m\|_{L^2(\Omega\backslash \bar{D})}.$$
If $\|\nabla n(x)/n(x)\|_{L^2(\Omega\backslash \bar{D})}\ll 1$, then it follows from (\ref{curlo}) that $\mathcal{A}_\omega (\bE,\bE)-\omega ^2\mathcal{B} (\bE,\bE)>0$ for all $\omega >0$ such that $\omega<\sqrt{1/C}$. Hence $\mathcal{A}_\omega -\omega ^2\mathcal{B} $ is injective for such $\omega $ and the analytical Fredholm theorem implies that the set of transmission eigenvalues is discrete.
\end{proof}

\begin{lem}
The operator $\bA_\omega$ is coercive. 
\label{Acoercive}
\end{lem}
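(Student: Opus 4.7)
The plan is to establish $\mathcal{A}_\omega(\bE,\bE)\geq C\|\bE\|_W^2$ by direct estimation on the diagonal of the form. Setting $\bPhi=\bE$ and using the sign hypothesis $n-1<0$, one rewrites
$$
\mathcal{A}_\omega(\bE,\bE)=\int_{\Omega\setminus\bar{D}}(1-n)^{-1}|\curl\curl\bE-\omega^2\bE|^2\diff x+\omega^2\|\curl\bE\|_{L^2(\Omega)}^2+\omega^4\|\bE\|_{L^2(\Omega)}^2.
$$
The last two summands already supply the $\|\curl\bE\|_{L^2(\Omega)}^2$ and $\|\bE\|_{L^2(\Omega)}^2$ pieces of $\|\bE\|_W^2$, so the only nontrivial task is to extract the $\|\curl\curl\bE\|_{L^2(\Omega\setminus\bar{D})}^2$ contribution from the first summand.

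First I would record the pointwise bounds $0<c_1\leq(1-n)^{-1}\leq c_2<\infty$ on $\Omega\setminus\bar{D}$, which follow from $n-1$ being uniformly negative together with the standing ellipticity assumption on $\epsilon_m,\mu_m$. Next, for a parameter $\delta\in(0,1)$ I would use the elementary Young-type inequality $|a-b|^2\geq(1-\delta)|a|^2+(1-1/\delta)|b|^2$ with $a=\curl\curl\bE$ and $b=\omega^2\bE$: the $|\curl\curl\bE|^2$ contribution is positive and should be paired with the lower bound $c_1$, while the $\omega^4|\bE|^2$ contribution is negative and should be paired with the upper bound $c_2$, giving
$$
\int_{\Omega\setminus\bar{D}}(1-n)^{-1}|\curl\curl\bE-\omega^2\bE|^2\diff x\geq c_1(1-\delta)\|\curl\curl\bE\|_{L^2(\Omega\setminus\bar{D})}^2+c_2(1-1/\delta)\omega^4\|\bE\|_{L^2(\Omega\setminus\bar{D})}^2.
$$
Combining with the $\omega^4\|\bE\|_{L^2(\Omega)}^2$ term from $\mathcal{A}_\omega$ and splitting the integral over $\Omega$ into $\Omega\setminus\bar{D}$ and $D$, the coefficient of $\|\bE\|_{L^2(\Omega\setminus\bar{D})}^2$ becomes $c_2(1-1/\delta)+1$, which is strictly positive as soon as $\delta>c_2/(c_2+1)$. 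Choosing any $\delta$ in the nonempty interval $(c_2/(c_2+1),1)$ then leaves all three coefficients strictly positive and produces $\mathcal{A}_\omega(\bE,\bE)\geq C(\omega,c_1,c_2)\|\bE\|_W^2$.

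The main obstacle is precisely this absorption step: the cross term $-2\omega^2\operatorname{Re}(\curl\curl\bE\cdot\overline{\bE})$ inside $|\curl\curl\bE-\omega^2\bE|^2$ is sign-indefinite, so any use of Young's inequality to pull out $|\curl\curl\bE|^2$ necessarily costs some $\omega^4|\bE|^2$ mass. What makes the argument close is the presence of the \emph{unweighted} term $\omega^4\|\bE\|_{L^2(\Omega)}^2$ in $\mathcal{A}_\omega$, whose coefficient $1$ exceeds $c_2(1/\delta-1)$ once $\delta$ is taken sufficiently close to $1$. Note that this coercivity argument relies only on the sign assumption $n-1<0$ and the standing uniform ellipticity, not on the smallness condition $\|\nabla n/n\|_{L^\infty}\ll 1$ used elsewhere in the proof of Theorem~\ref{negative}.
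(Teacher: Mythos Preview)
Your argument is correct and follows essentially the same strategy as the paper's proof: both extract the $\|\curl\curl\bE\|_{L^2(\Omega\setminus\bar D)}^2$ contribution via an elementary quadratic inequality and absorb the resulting $\omega^4\|\bE\|^2$ deficit into the unweighted term $\omega^4\|\bE\|_{L^2(\Omega)}^2$. The only cosmetic difference is that the paper first replaces $(1-n)^{-1}$ by the single lower bound $\gamma=(1-n_-)^{-1}$ and then completes the square exactly at the level of norms (choosing $\gamma<\delta<\gamma+1$), whereas you apply Young's inequality pointwise and use both endpoint bounds $c_1,c_2$ on the positive and negative pieces respectively; the two routes are interchangeable and yield the same coercivity with comparable constants.
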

\begin{proof}Taking specific $\bPhi=\bE\in W$, we have
\begin{multline}\label{A}
(\bA_\omega \bE,\bE)_W=\int_{\Omega\backslash \bar{D}}(1-n)^{-1}|\curl\curl \bE-\omega^2\bE|^2\diff x+\omega^2\|\curl \bE\|^2_{L^2(\Omega)}+\omega^4\|\bE\|^2_{L^2(\Omega)}\\
=\int_{\Omega\backslash \bar{D}}(1-n)^{-1}\big(|\curl\curl \bE|^2-2\mbox{Re}\{\omega^2\bE\cdot\curl\curl \bE\}+\omega^4|\bE|^2\big)\diff x\\
+\omega^2\|\curl \bE\|^2_{L^2(\Omega)}+\omega^4\|\bE\|^2_{L^2(\Omega)}.
\end{multline}

Denote $n_{-}=\inf_{x\in\Omega\backslash \bar{D}}n(x)$, and $n^{+}=\sup_{x\in\Omega\backslash \bar{D}}n(x)$. We assume $0<n_-<n(x)<n^+<1$ for $x\in\Omega\backslash \bar{D}$. Set $\gamma=\frac{1}{1-n_-}$. Using the equality
\beq
\gamma X^2-2\gamma XY+(1+\gamma)Y^2=\delta \left( Y-\frac{\gamma}{\delta}X \right)^2+ \left(\gamma-\frac{\gamma^2}{\delta}\right)X^2+(1+\gamma-\delta)Y^2,
\label{ineq}
\eeq
for $X=\|\curl\curl \bE\|_{L^2(\Omega\backslash \bar{D})}$, $Y=\omega^2\|\bE\|_{L^2(\Omega\backslash \bar{D})}$ and arbitrary $\delta>0$, (\ref{A}) becomes
\begin{multline*}
(\bA_\omega \bE,\bE)_W\geq\gamma\|\curl\curl \bE\|^2_{L^2(\Omega\backslash \bar{D})}-2\omega^2\gamma\|\curl\curl \bE\|_{L^2(\Omega\backslash \bar{D})} \|\bE\|_{L^2(\Omega\backslash \bar{D})}\\
+\omega^4(1+\gamma)\|\bE\|^2_{L^2(\Omega\backslash \bar{D})}
+\omega^2\|\curl\bE\|^2_{L^2(\Omega)}+\omega^4\|\bE\|^2_{L^2(D)}\\
\geq \left( \gamma-\frac{\gamma^2}{\delta} \right)\|\curl\curl \bE\|^2_{L^2(\Omega\backslash \bar{D})}+\omega^4(1+\gamma-\delta)\|\bE\|^2_{L^2(\Omega\backslash \bar{D})}+\omega^2\|\curl \bE\|^2_{L^2(\Omega)}+\omega^4\|\bE\|^2_{L^2(D)},
\end{multline*}
where $\gamma<\delta<\gamma+1$. For such an $\delta$, we conclude that there exists a constant $C>0$ such that
$$(\bA_\omega \bE,\bE)_W\geq C\|\bE\|^2_W$$
for all $\bE\in W$ which proves that $\bA_\omega:W\rightarrow W$ is coercive.
\end{proof}

\begin{lem}
The operator $\bB$ is compact on $W$. 
\label{Bcompact}
\end{lem}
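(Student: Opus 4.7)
The plan is to reduce the compactness of $\bB:W\to W$ to the compact embedding $W\hookrightarrow L^2(\Omega)^3$, and then to obtain that embedding via a Helmholtz-type decomposition combined with the classical Weber--Picard compactness theorem for Maxwell-type spaces.

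Testing the defining identity $(\bB\bE,\bPhi)_W = 2\int_\Omega\bE\cdot\bar\bPhi\diff x$ with $\bPhi = \bB\bE$ yields
\[\|\bB\bE\|_W^2 = 2(\bE,\bB\bE)_{L^2(\Omega)} \leq 2\|\bE\|_{L^2(\Omega)}\|\bB\bE\|_{L^2(\Omega)} \leq 2\|\bE\|_{L^2(\Omega)}\|\bB\bE\|_W,\]
so $\|\bB\bE - \bB\bF\|_W \leq 2\|\bE-\bF\|_{L^2(\Omega)}$ by linearity. Hence it suffices to show that every sequence bounded in $W$ admits an $L^2$-convergent subsequence. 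Given such $(\bE_k)$, I would use the Helmholtz decomposition $\bE_k = \nabla p_k + \bw_k$ with $p_k\in H_0^1(\Omega)$ solving $\Delta p_k = \div \bE_k$ in $H^{-1}(\Omega)$ and $\bw_k := \bE_k-\nabla p_k$. Since $p_k$ vanishes on $\partial\Omega$, one has $\nu\times\nabla p_k = 0$ there, so $\bw_k$ still lies in $H_0(\mathrm{curl},\Omega)$ and satisfies $\div \bw_k = 0$ in $\Omega$; Weber's compact embedding theorem for such divergence-free vector fields with vanishing tangential trace then extracts an $L^2$-convergent subsequence of $\bw_k$.

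The remaining and most delicate step is to obtain $L^2$-compactness of $\nabla p_k$, since boundedness of $p_k$ in $H_0^1(\Omega)$ alone does \emph{not} furnish it. Here I would exploit the extra regularity encoded in $W$: the fourth-order condition $\curl\curl\bE_k\in L^2(\Omega\setminus\bar D)$ together with the strong boundary trace $\nu\times\curl\bE_k = 0$ on $\partial\Omega$ built into the definition of $H_0(\mathrm{curl},\Omega)$ in (\ref{Hcurl}). Applying the identity $\curl\curl = -\Delta + \nabla\div$ to $\bE_k$ on the annular region $\Omega\setminus\bar D$ converts this fourth-order control into second-order elliptic information for $p_k$ on the shell, delivering a strongly convergent subsequence of $\nabla p_k$ in $L^2(\Omega\setminus\bar D)$; the matching of tangential traces across $\partial D$, combined with the convergence already extracted for $\bw_k$, then propagates the compactness into $D$. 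This gradient step is the principal obstacle, and if it proves too delicate an alternative is to invoke a Picard--Weck--Witsch Maxwell-compactness theorem adapted to the mixed interior-versus-shell regularity encoded in $W$, bypassing the Helmholtz split entirely.
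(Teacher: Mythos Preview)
Your reduction to the compact embedding $W\hookrightarrow L^2(\Omega)^3$ via $\|\bB\bE\|_W\le 2\|\bE\|_{L^2(\Omega)}$ matches the paper exactly; the difference lies in how that embedding is obtained.

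There is a genuine gap in the gradient step of your Helmholtz split. Since $\curl\nabla p_k=0$ identically, the hypothesis $\curl\curl\bE_k\in L^2(\Omega\setminus\bar D)$ carries information \emph{only} about $\bw_k$ and says nothing about $p_k$: in the identity $\curl\curl\bE_k=-\Delta\bE_k+\nabla\div\bE_k$ you have one $L^2$ bound on the left and two individually uncontrolled terms on the right, so no separate elliptic estimate on $p_k$ can be extracted. Nothing in the bare definition of $W$ bounds $\div\bE_k$, and without that the gradient part is not $L^2$-compact: take $\bE_k=\nabla p_k$ with $(p_k)$ bounded in $H_0^1(\Omega)$ but $(\nabla p_k)$ having no $L^2$-convergent subsequence; then $\bE_k\in W$ with $\curl\curl\bE_k=0$, yet the sequence is not $L^2$-precompact. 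The propagation-into-$D$ argument cannot repair this, as the failure already occurs in the shell.

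The paper closes the gap differently: it invokes the standing hypothesis $\|\nabla n/n\|_{L^\infty(\Omega\setminus\bar D)}\ll 1$ of Theorem~\ref{negative} to place $W$ inside the weighted space $H(\mbox{curl},\div n;\Omega)$, and then cites the Hazard--Lenoir compact embedding of that space (under $\nu\times\bE=0$ on $\partial\Omega$) into $L^2(\Omega)$. The missing divergence control is supplied precisely by the smallness of $\nabla n/n$ combined with the constraint $\div(n\bE_m)=0$ coming from the underlying eigenvalue problem (cf.\ the computation $\div\bE=-(\nabla n/n)\cdot\bE_m$ in the proof of Theorem~\ref{negative}). Your fallback to a Picard--Weck--Witsch-type theorem is in the right spirit, but any such result still needs a divergence bound as input, and identifying its source is the step you have not yet supplied.
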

\begin{proof}
From the Cauchy-Schwarz inequality we have that
$$|(\bB \bE,\bPhi)_W|\leq 2\|\bE\|_{L^2(\Omega)}\|\bPhi\|_{L^2(\Omega)}.$$
As a consequence, we have 
$$\|\bB \bE\|_W=\sup_{\bPhi\in W, \bPhi\neq 0}\frac{|(\bB \bE,\bPhi)_W|}{\|\bPhi\|_{W}}\leq 2\|\bE\|_{L^2(\Omega)},$$
for every $\bE\in W$. Thus $\bB$ appears as a continuous operator from $L^2(\Omega)$ into $W$.  It is easy to check that under the assumption that $\|\nabla n(x)/n(x)\|_{L^2(\Omega\backslash \bar{D})}\ll 1$, $W$ is a subspace of $H(\mbox{curl},\div n;\Omega)$, see B.1 in \cite{HL}, where the compactness of embedding of $H(\mbox{curl},\div n;\Omega)$(under boundary condition $\nu\times\bE=0$ on $\partial\Omega$) into $L^2(\Omega)$ has been shown to be compact.
\end{proof}

From the proof of the previous theorem, we deduce a lower bound for the first transmission eigenvalue when $n-1$ is negative in $\Omega\backslash\bar{D}$, that is, $\omega \geq \sqrt{1/C}$, where $C$ is the constant in \eqref{poincare}.

\subsection{Existence of the interior transmission eigenvalues when $n-1<0$}
In this subsection, we prove the existence of transmission eigenvalues following \cite{CCH,CGH} where the acoustic wave was considered.

If we consider the generalized eigenvalue problem
\beq
\bA_\omega \bE-\lambda(\omega )\bB\bE=0,~~\bE\in W,
\label{eig}
\eeq
which is known to have an infinite sequence of eigenvalues $\lambda_j(\omega )$, $j\in\mathbb{N}$. The proof of the existence of transmission eigenvalues makes use of the following theorem shown in \cite{CH}.
\begin{thm}
Let $\tau\rightarrow \bA_\tau$ be a continuous mapping from $(0,\infty)$ to the set of self-adjoint and positive definite bounded linear operator on $W$ and let $\bB$ be a self-adjoint and non-negative compact bounded linear operator on $W$. We assume that there exists two positive constants $\tau_0>0$ and $\tau_1>0$ such that
\begin{enumerate}
\item $\bA_{\tau_0}-\tau_0\bB$ is positive on $W$,
\item $\bA_{\tau_1}-\tau_1\bB$ is non-positive on a m-dimensional subspace of $W$.
\end{enumerate}
Then each of the equations $\lambda_j(\tau)=\tau$ for $j=1,\dots,m$, has at least one solution in $[\tau_0,\tau_1]$ where $\lambda_j(\tau)$ is the $j^{th}$ eigenvalue (counting multiplicity) of $\bA_\omega $ with respect to $\bB$, {\it i.e.}, ker\,$(\bA_\tau-\lambda_j(\tau)\bB)\neq \{0\}$.
\label{exi}
\end{thm}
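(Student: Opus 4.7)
The plan is to convert the generalized eigenvalue problem $\bA_\tau\bE=\lambda\bB\bE$ into a standard spectral problem for a compact self-adjoint operator and then run an intermediate value argument on the eigenvalue branches $\tau\mapsto\lambda_j(\tau)$. Since each $\bA_\tau$ is self-adjoint and coercive, its square root $\bA_\tau^{1/2}$ and inverse square root $\bA_\tau^{-1/2}$ are well-defined bounded operators on $W$. The auxiliary operator $\mathbf{K}_\tau:=\bA_\tau^{-1/2}\bB\bA_\tau^{-1/2}$ is then self-adjoint, non-negative and compact, so its spectrum is a sequence $\mu_1(\tau)\geq\mu_2(\tau)\geq\cdots\to 0^{+}$ of (possibly) positive eigenvalues. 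The identity $\bA_\tau u=\lambda\bB u\Longleftrightarrow\mathbf{K}_\tau(\bA_\tau^{1/2}u)=\lambda^{-1}(\bA_\tau^{1/2}u)$ shows that the generalized eigenvalues are exactly $\lambda_j(\tau)=1/\mu_j(\tau)$, arranged in increasing order.

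I next invoke the Courant--Fischer min-max principle for $\mathbf{K}_\tau$, which after the substitution $v=\bA_\tau^{-1/2}u$ becomes
$$\lambda_j(\tau)=\min_{\substack{V\subset W\\\dim V=j}}\ \max_{v\in V\setminus\{0\}}\frac{(\bA_\tau v,v)_W}{(\bB v,v)_W},$$
with the convention that the Rayleigh quotient is $+\infty$ when $(\bB v,v)_W=0$. Hypothesis (1) then reads $(\bA_{\tau_0}v,v)_W>\tau_0(\bB v,v)_W$ for every $v\neq 0$, so every Rayleigh quotient strictly exceeds $\tau_0$ and hence $\lambda_j(\tau_0)>\tau_0$ for all $j$. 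Hypothesis (2) furnishes an $m$-dimensional subspace $V^{\star}\subset W$ on which $(\bA_{\tau_1}v,v)_W\leq\tau_1(\bB v,v)_W$; restricting the outer minimum to any $j$-dimensional subspace of $V^{\star}$ for $j\leq m$ gives $\lambda_j(\tau_1)\leq\tau_1$. Observe that $V^{\star}\cap\ker\bB=\{0\}$, for otherwise a nonzero $v$ in the intersection would yield $(\bA_{\tau_1}v,v)_W\leq 0$, contradicting the positive definiteness of $\bA_{\tau_1}$; hence the Rayleigh quotients are finite on $V^{\star}$ and the bound is genuine.

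The final ingredient is continuity of $\tau\mapsto\lambda_j(\tau)$. Norm-continuity of $\tau\mapsto\bA_\tau$ into the invertible self-adjoint operators yields norm-continuity of $\tau\mapsto\bA_\tau^{-1/2}$ (via the holomorphic functional calculus for a bounded self-adjoint operator with spectrum bounded away from zero), and therefore of $\tau\mapsto\mathbf{K}_\tau$. Standard perturbation theory for compact self-adjoint operators then gives continuity of each $\mu_j(\tau)$, and so of $\lambda_j(\tau)=1/\mu_j(\tau)$ wherever the latter is finite. The function $g_j(\tau):=\lambda_j(\tau)-\tau$ is therefore continuous on $[\tau_0,\tau_1]$, strictly positive at $\tau_0$ and non-positive at $\tau_1$, so the intermediate value theorem produces a point $\tau_j^{\star}\in[\tau_0,\tau_1]$ with $\lambda_j(\tau_j^{\star})=\tau_j^{\star}$ for each $j=1,\ldots,m$.

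The main subtlety I anticipate is the possible degeneracy of $\bB$. Because $\bB$ is only non-negative, $\ker\bB$ may be nontrivial and the Rayleigh quotient can be infinite on a large set of directions; one must also verify that $\mathbf{K}_\tau$ actually possesses at least $m$ strictly positive eigenvalues so that $\lambda_1(\tau),\ldots,\lambda_m(\tau)$ are genuine finite numbers. Both points are resolved by working with $\mathbf{K}_\tau$ rather than with the pencil directly and by exploiting hypothesis (2) in the form $V^{\star}\cap\ker\bB=\{0\}$ noted above, which in particular forces $\dim\mathrm{ran}\,\mathbf{K}_\tau\geq m$ in a neighborhood of $\tau_1$ by lower semicontinuity of the rank under norm convergence. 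Everything else is routine Rayleigh-quotient bookkeeping.
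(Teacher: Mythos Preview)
The paper does not prove this theorem; it is quoted from \cite{CH} and used as a black box. Your argument is essentially the standard proof found there: reduce the pencil $\bA_\tau-\lambda\bB$ to the compact self-adjoint operator $\mathbf{K}_\tau=\bA_\tau^{-1/2}\bB\bA_\tau^{-1/2}$, read off $\lambda_j(\tau)=1/\mu_j(\tau)$ via Courant--Fischer, use hypotheses (1) and (2) to bracket $\lambda_j(\tau_0)>\tau_0$ and $\lambda_j(\tau_1)\le\tau_1$, and conclude by the intermediate value theorem. So in substance your approach coincides with the cited reference.

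One point deserves tightening. To run the IVT on $g_j(\tau)=\lambda_j(\tau)-\tau$ you need $\lambda_j(\tau)<\infty$ (equivalently $\mu_j(\tau)>0$) for \emph{every} $\tau\in[\tau_0,\tau_1]$ and every $j\le m$, not merely in a neighborhood of $\tau_1$. Your lower-semicontinuity remark only gives the latter. The clean fix is to observe that since $\bA_\tau^{-1/2}$ is a bijection of $W$, one has $\mathrm{rank}\,\mathbf{K}_\tau=\mathrm{rank}\,\bB$ for \emph{all} $\tau>0$; and your own observation $V^{\star}\cap\ker\bB=\{0\}$ with $\dim V^{\star}=m$ forces $\mathrm{rank}\,\bB\ge m$. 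Hence $\mu_1(\tau),\ldots,\mu_m(\tau)>0$ throughout $(0,\infty)$, the branches $\lambda_1,\ldots,\lambda_m$ are finite and continuous on the whole interval, and the IVT applies as stated.
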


Suppose $n-1<0$ in $\Omega\backslash \bar{D}$, then the transmission eigenvalues are the solutions $\lambda_j(\omega )=\omega ^4$ of (\ref{eig}), $j\in\mathbb{N}$. The existence of the interior eigenvalues is given in the following theorem.
\begin{thm}
If $n-1<0$ in $D$, then there exists an infinite discrete set of transmission eigenvalues.
\label{exim}
\end{thm}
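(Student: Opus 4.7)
The plan is to invoke Theorem~\ref{exi} with spectral parameter $\tau=\omega^4$ applied to the operators $\bA_\omega,\bB:W\to W$ from the previous subsection. The structural hypotheses are immediate: the sesquilinear forms $\mathcal{A}_\omega$ and $\mathcal{B}$ are symmetric so $\bA_\omega,\bB$ are self-adjoint; Lemma~\ref{Acoercive} makes $\bA_\omega$ positive definite and its dependence on $\omega$ is continuous by inspection of $\mathcal{A}_\omega$; $\bB$ is non-negative since $\mathcal{B}(\bE,\bE)=2\|\bE\|_{L^2(\Omega)}^2$, and compact by Lemma~\ref{Bcompact}. The low-threshold hypothesis is furnished by the proof of Theorem~\ref{negative}: for $\omega_0<1/\sqrt{C}$ (with $C$ the Poincar\'e constant in \eqref{poincare}) and under the smallness assumption on $\|\nabla n/n\|_{L^\infty(\Omega\backslash\bar{D})}$, the bound \eqref{curlo} is strictly positive on $W\setminus\{0\}$, so $\bA_{\omega_0}-\omega_0^4\bB$ is positive.

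The crux is verifying the second hypothesis of Theorem~\ref{exi} for arbitrary $m\in\mathbb{N}$: produce $\omega_1=\omega_1(m)>\omega_0$ and an $m$-dimensional subspace of $W$ on which $\bA_{\omega_1}-\omega_1^4\bB$ is non-positive. I would adapt the ball-comparison strategy of \cite{CCH,CGH} to the Maxwell setting. Fix a ball $B_\rho\Subset\Omega\backslash\bar{D}$, freeze the index on $B_\rho$ to the constant $n^\ast:=\sup_{\Omega\backslash\bar{D}}n\in(0,1)$, and consider the auxiliary Maxwell interior transmission eigenvalue problem on $(B_\rho;n^\ast)$. This constant-index problem is amenable to separation of variables via spherical vector wave functions and admits an infinite discrete sequence of eigenvalues $\omega_j^\ast\to+\infty$ with eigenpairs $(\bE_m^{(j)},\bE_0^{(j)})$. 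For each such pair, the difference $\bE^{(j)}:=\bE_m^{(j)}-\bE_0^{(j)}$, extended by zero to $\Omega$, belongs to $W$: the matched boundary conditions $\nu\times\bE^{(j)}=0$ and $\nu\times\curl\bE^{(j)}=0$ on $\partial B_\rho$ give both $\bE^{(j)}\in H_0(\curl,\Omega)$ and $\curl\curl\bE^{(j)}\in L^2(\Omega\backslash\bar{D})$, while the ambient sesquilinear forms restricted to these extensions agree with their ball counterparts. A Courant--Fischer min-max argument then yields the domain-monotonicity bound $\lambda_j(\omega)\leq\lambda_j^{\mathrm{ball}}(\omega)$ for every $\omega>0$, where $\lambda_j$ is the $j$-th generalized eigenvalue of $\bA_\omega$ with respect to $\bB$. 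A scaling argument on the ball (enlarging $\rho$ within $\Omega\backslash\bar{D}$ shifts ball transmission eigenvalues downward), combined with an appropriate choice of $\omega_1$ relative to the first $m$ such eigenvalues, yields $\lambda_m(\omega_1)\leq\omega_1^4$, which is exactly the required non-positivity on an $m$-dimensional subspace.

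Theorem~\ref{exi} then delivers at least $m$ transmission eigenvalues of the original problem in $[\omega_0^4,\omega_1(m)^4]$, and since $m$ is arbitrary, this set is infinite; combined with the discreteness from Theorem~\ref{negative}, it accumulates only at $+\infty$. The main technical obstacle is the explicit spectral analysis of the constant-index Maxwell problem on the ball, together with the careful bookkeeping needed to transfer non-positivity from individual ball eigenpairs at different $\omega_j^\ast$ into an $m$-dimensional span of test functions in $W$ at a single frequency $\omega_1$.
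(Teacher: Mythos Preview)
Your overall strategy—invoke Theorem~\ref{exi} with $\tau=\omega^4$, verify positivity at small $\omega_0$ via \eqref{curlo}, and manufacture an $m$-dimensional non-positive subspace by comparison with a constant-index ball problem—is correct and matches the paper's. The route to the second hypothesis, however, differs.

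The paper does \emph{not} use a single ball with higher eigenvalues and a Courant--Fischer argument. Instead it packs $M(r)$ \emph{disjoint} balls $B_r^j\subset\Omega\backslash\bar{D}$ of the same radius, takes $\omega_1$ to be the \emph{first} transmission eigenvalue for any one of them with constant index $n^+$ (the same for all by translation), and uses the corresponding eigenfunctions $\bE_j\in H_0^2(\mathrm{curl},B_r^j)$ extended by zero. Disjoint supports make $\{\bE_j^0\}_{j=1}^{M(r)}$ trivially linearly independent, and for any $\bE=\sum_j\alpha_j\bE_j^0$ the quadratic form $\mathcal{A}_{\omega_1}-\omega_1^4\mathcal{B}$ decouples into a sum over $j$; the pointwise bound $(1-n)^{-1}\le(1-n^+)^{-1}$ together with the ball variational identity \eqref{exm0} then kills each summand. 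Letting $r\to0$ gives $M(r)\to\infty$. This avoids any min-max machinery and any need to synchronize eigenpairs at different frequencies.

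Your single-ball route can be made to work, but your write-up conflates two mechanisms. Once you have the domain-monotonicity bound $\lambda_m(\omega)\le\lambda_m^{\mathrm{ball}}(\omega)$ from Courant--Fischer (valid because extension by zero embeds the ball test space into $W$ and $(1-n)^{-1}\le(1-n^+)^{-1}$ on $B_\rho$), you simply take $\omega_1$ to be the $m$-th transmission eigenvalue of the constant-index ball, so that $\lambda_m^{\mathrm{ball}}(\omega_1)=\omega_1^4$ and hence $\lambda_m(\omega_1)\le\omega_1^4$. The scaling argument you mention is unnecessary, and the ``bookkeeping obstacle'' you flag—combining ball eigenpairs at different $\omega_j^\ast$ into a single-frequency subspace—is a non-issue precisely because the min-max already delivers the subspace abstractly. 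The paper's disjoint-balls trick is the more elementary and explicit of the two, and is what the cited references \cite{CCH,CGH} actually do.
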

\begin{proof}
In the proof of Theorem \ref{negative} we have shown that for $\omega<\sqrt{1/C}$, where $C$ is given in (\ref{poincare}), $\bA_{\omega _0}-\omega ^4_0\bB$ is positive in $W$. Hence, the first assumption in Theorem \ref{exi} is satisfied. Now we try to find $\omega _1$ such that the second assumption is also satisfied.

Let $B_r^j$, $j=1,\dots,M(r)$, be $M(r)$ disjoint balls of radius $r$ included in $\Omega\backslash \bar{D}$. We denote by $\omega _1$ the first transmission eigenvalue corresponding to the interior transmission problem for $B_r^j$ for all $j=1,\dots,M(r)$ with index of refraction $n^+$ which is known to exist \cite{CK}, where $n^{+}:=\sup_{x\in\Omega\backslash \bar{D}}n(x)$. Let $\bE_j\in H^2_0(\mbox{curl},B_r^j)$, $j=1,\dots,M(r)$, be the corresponding eigenvector which satisfies
\beq
\int_{B_r^j}\frac{1}{1-n^+}(\curl\curl \bE_j-\omega _1^2n^+\bE_j)\cdot\overline{(\curl\curl \bPhi-\omega _1^2\bPhi})\diff x=0
\label{exm0}
\eeq
for all $\bPhi\in H^2_0(\mbox{curl},B_r^j)$. We denote by $\bE^0_j\in H^2_0(\mbox{curl},\Omega)$ the extension of $\bE_j$ by zero to the whole of $\Omega$ and we define a $M(r)$-dimensional subspace of $W$ by $V:=\mbox{span}\{\bE^0_j,1\leq j\leq M(r)\}$. Since for $j\neq m$, $\bE^0_j$ and $\bE^0_m$ have disjoint support, for $\bE=\sum_{j=1}^{j=M(r)}\alpha_j\bE^0_j\in V$, we have
\begin{multline}
\mathcal{A}_{\omega _1}(\bE,\bE)-\omega _1^2\mathcal{B}(\bE,\bE)\\
=\sum_{j=1}^{M(r)}|\alpha_j|^2\Big(\int_{\Omega\backslash \bar{D}}(1-n)^{-1}|\curl\curl \bE^0_j-\omega _1^2\bE^0_j|^2\diff x+\omega _1^2\int_{\Omega}|\curl\bE^0_j|^2\diff x-\omega _1^4\int_{\Omega}|\bE^0_j|^2\diff x\Big)\\
=\sum_{j=1}^{M(r)}|\alpha_j|^2\Big(\int_{B_r^j}(1-n)^{-1}|\curl\curl \bE_j-\omega _1^2\bE_j|^2\diff x+\omega _1^2\int_{B_r^j}|\curl \bE_j|^2\diff x-\omega _1^4\int_{B_r^j}|\bE_j|^2\diff x\Big)\\
\leq \sum_{j=1}^{M(r)}|\alpha_j|^2\Big(\frac{1}{1-n^+}\int_{B_r^j}|\curl\curl \bE_j-\omega _1^2\bE_j|^2\diff x+\omega _1^2\int_{B_r^j}|\curl \bE_j|^2\diff x-\omega _1^4\int_{B_r^j}|\bE_j|^2\diff x\Big).
\label{ns1}
\end{multline}
By Green's first vector theorem, we have
\begin{equation*}
\mbox{Re}
\left \{\int_{B_r^j}\curl\curl \bE_j\cdot\bar{\bE}_j\diff x\right \}
=\int_{B_r^j}|\curl \bE_j|^2\diff x-\mbox{Re}
\left \{\int_{\partial B_r^j}\curl\bE_j\cdot\overline{\nu\times \bE_j}\diff s\right \}.
\end{equation*}
Since $\nu\times\bE_j=0$ on $\partial B_r^j$, then following equality holds
\beq
\int_{B_r^j}|\curl \bE_j|^2\diff x=\mbox{Re}
\left \{\int_{B_r^j}\curl\curl \bE_j\cdot\bar{\bE}_j\diff x\right \}.
\label{ex}
\eeq
Substituting (\ref{ex}) into (\ref{ns1}), together with (\ref{exm0}), we have
\begin{multline*}
\mathcal{A}_{\omega _1}(\bE,\bE)-\omega _1^2\mathcal{B}(\bE,\bE)\\
\leq\sum_{j=1}^{M(r)}|\alpha_j|^2\Big(\int_{B_r^j}\frac{1}{1-n^+}(\curl\curl \bE_j-\omega _1^2n^+\bE_j)\cdot\overline{(\curl\curl\bE_j-\omega _1^2\bE_j)}\diff x\Big)=0.
\end{multline*}
Hence, the second assumption in Theorem \ref{exi} is satisfied.

Thus, we conclude that there exist $M(r)$ transmission eigenvalues in $(\sqrt{1/C},\omega _1]$. Letting $r\rightarrow 0$, we have that $M(r)\rightarrow \infty$ and thus we can now deduce that there exists an infinite set of transmission eigenvalues.

\end{proof}

\subsection{Discreteness of the spectrum when $n-1>0$}
\begin{thm}
Assume that $n(x)-1>0$ for $x\in \Omega\backslash \bar{D}$. Then the set of transmission eigenvalues is discrete.
\label{positive}
\end{thm}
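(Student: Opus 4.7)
The plan is to carry out the same functional-analytic scheme used for Theorem \ref{negative}: rewrite the variational equation (\ref{var}) in the form $(\mathbf{A}_\omega-\omega^4\mathbf{B})\bE=0$ on $W$, show that $\mathbf{A}_\omega$ is coercive and $\mathbf{B}$ is compact, verify injectivity of $\mathbf{A}_\omega-\omega^4\mathbf{B}$ for at least one $\omega$, and then appeal to the analytic Fredholm theorem to conclude discreteness. What changes in the regime $n-1>0$ is that the leading factor $(n-1)^{-1}$ is already positive, so we keep the original sign of (\ref{var}) rather than flipping it as in the proof of Theorem \ref{negative}; however, positivity of $(n-1)^{-1}$ alone is not enough, because the middle term $-\omega^2\|\curl\bE\|^2_{L^2(\Omega)}$ in (\ref{var}) is indefinite and must first be absorbed into a coercive principal part by an algebraic rearrangement.

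The key manoeuvre is the identity
$$\curl\curl\bE-\omega^2\bE=(\curl\curl\bE-\omega^2 n\bE)+\omega^2(n-1)\bE\quad\mbox{in}~\Omega\backslash\bar{D}.$$
Substituting this into the first integral of (\ref{var}), expanding the square and integrating the cross terms by parts using $\bE,\bPhi\in W$ (so that $\nu\times\bE=0$ and $\nu\times\curl\bE=0$ on $\partial\Omega$, and the traces of $\bE$ match across $\partial D$), the resulting cross contributions cancel the indefinite $-\omega^2\|\curl\bE\|^2_{L^2(\Omega)}$ piece and produce, up to compact remainders in $L^2(\Omega)$, the sesquilinear form
$$\mathcal{A}_\omega(\bE,\bPhi):=\int_{\Omega\backslash\bar{D}}(n-1)^{-1}(\curl\curl\bE-\omega^2 n\bE)\cdot\overline{(\curl\curl\bPhi-\omega^2 n\bPhi)}\,\diff x+\omega^2\int_\Omega\curl\bE\cdot\overline{\curl\bPhi}\,\diff x+\|\bE\|^2_{L^2(\Omega)},$$
with the remaining $\omega$-dependent terms lumped into $\omega^4\mathcal{B}(\bE,\bPhi)$, a form involving only $L^2(\Omega)$-inner products. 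The Riesz representation theorem then gives bounded operators $\mathbf{A}_\omega,\mathbf{B}:W\to W$ such that (\ref{var}) is equivalent to $\mathbf{A}_\omega\bE=\omega^4\mathbf{B}\bE$.

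Coercivity of $\mathbf{A}_\omega$ is established in the spirit of Lemma \ref{Acoercive}: since $\inf_{x\in\Omega\backslash\bar{D}}(n(x)-1)>0$, the first term of $\mathcal{A}_\omega(\bE,\bE)$ controls $\|\curl\curl\bE-\omega^2 n\bE\|^2_{L^2(\Omega\backslash\bar{D})}$, and an analogue of inequality (\ref{ineq}) converts this into $L^2$-control of $\curl\curl\bE$ together with a positive remainder; combining with $\omega^2\|\curl\bE\|^2_{L^2(\Omega)}+\|\bE\|^2_{L^2(\Omega)}$, the Poincaré inequality (\ref{poincare}) applied to $\bE\in H_0(\curl,\Omega)$, the divergence identity $\div(n\bE_m)=0$ in $\Omega\backslash\bar{D}$, and the smallness assumption $\|\nabla n/n\|_{L^\infty(\Omega\backslash\bar{D})}\ll 1$ then recover the full norm $\|\bE\|_W$. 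Compactness of $\mathbf{B}$ is verbatim the argument of Lemma \ref{Bcompact}, since $W\hookrightarrow L^2(\Omega)$ compactly under the same divergence hypothesis. Injectivity of $\mathbf{A}_\omega-\omega^4\mathbf{B}$ at one sufficiently small $\omega$ follows exactly as in the proof of Theorem \ref{negative}, by choosing $\omega$ below the Poincaré threshold so that the positive part dominates, and the analytic Fredholm theorem then delivers discreteness of the eigenvalue set. The main obstacle is the coercivity step: unlike the $n-1<0$ case, where sign flipping produces a manifestly positive principal form, here one must first perform the algebraic rearrangement carefully and check that the boundary contributions on $\partial D$ generated by integration by parts are either cancelled by the corresponding $D$-integrals (where $\curl\curl\bE=\omega^2\bE$) or absorbed into compact remainders; only after this bookkeeping is complete does the Poincaré estimate yield coercivity.
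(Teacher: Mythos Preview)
Your rewriting of (\ref{var}) as $\mathcal{A}_\omega-\omega^4\mathcal{B}$ with $\mathcal{B}$ an $L^2(\Omega)$-form does not go through. Carry out the algebra: substituting your identity and integrating the cross term $2\omega^2\mathrm{Re}\int_{\Omega\setminus\bar D}(\curl\curl\bE)\cdot\bar\bE\,\diff x$ by parts yields $2\omega^2\|\curl\bE\|^2_{L^2(\Omega\setminus\bar D)}$ plus a boundary term on $\partial D$. Combined with the original $-\omega^2\|\curl\bE\|^2_{L^2(\Omega)}$, this leaves $+\omega^2\|\curl\bE\|^2_{L^2(\Omega\setminus\bar D)}-\omega^2\|\curl\bE\|^2_{L^2(D)}$ plus the boundary integral, not $+\omega^2\|\curl\bE\|^2_{L^2(\Omega)}$ modulo $L^2$ terms as you claim. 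The discrepancy $2\omega^2\|\curl\bE\|^2_{L^2(D)}$ and the $\partial D$ boundary term are \emph{not} $L^2(\Omega)$-type forms, so your $\mathcal{B}$ is not the compact operator of Lemma~\ref{Bcompact}. Your suggestion that the boundary terms are ``cancelled by the corresponding $D$-integrals (where $\curl\curl\bE=\omega^2\bE$)'' is invalid at the coercivity stage: that equation holds only for eigenfunctions, not for arbitrary $\bE\in W$, and coercivity must be proved on all of $W$.

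The paper handles $n-1>0$ by a \emph{different} splitting: it keeps $\mathcal{F}_\omega$ with the original factor $(\curl\curl\bE-\omega^2\bE)$ and moves the troublesome curl term into the compact part, setting $\mathcal{G}(\bE,\bPhi)=2\int_\Omega\curl\bE\cdot\overline{\curl\bPhi}\,\diff x$. Compactness of the associated operator $\bG$ on $W$ is a separate embedding result (cited from \cite{H}), not the $W\hookrightarrow L^2$ embedding you invoke. Coercivity of $\bF_\omega$ then follows directly from (\ref{ineq}) with $\gamma=(n^+-1)^{-1}$, and injectivity for small $\omega$ uses the Poincar\'e-type estimate $\|\curl\bE\|_{L^2(\Omega)}^2\le C_1\|\curl\curl\bE\|_{L^2(\Omega)}^2$ together with the eigenfunction relation $\curl\curl\bE=\omega^2\bE$ in $D$ (legitimate there, since kernel elements \emph{are} eigenfunctions). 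Note also that the paper's argument requires no smallness of $\|\nabla n/n\|_{L^\infty}$; your importing of that hypothesis from Theorem~\ref{negative} would weaken the result unnecessarily.
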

\begin{proof}
Suppose $n(x)-1>0$ for $x\in \Omega\backslash \bar{D}$. Define sesquilinear forms
\begin{multline*}
\mathcal{F}_\omega (\bE,\bPhi)=\int_{\Omega\backslash \bar{D}}(n-1)^{-1}(\curl\curl \bE-\omega ^2\bE)\cdot \overline{(\curl\curl\bPhi -\omega ^2\bPhi})\diff x\\
+\omega ^2\int_{\Omega}\curl \bE\cdot \overline{\curl\bPhi}\diff x+\omega ^4\int_{\Omega}\bE\cdot \bar{\bPhi}\diff x,
\end{multline*}
and
$$\mathcal{G}_\omega (\bE,\bPhi)=2\int_{\Omega}\curl \bE\cdot \overline{\curl\bPhi}\diff x.$$
Then the variational form (\ref{var}) of the interior transmission problem becomes: find $\bE\in W$ such that
$$\mathcal{F}_\omega (\bE,\bPhi)-\omega ^2\mathcal{G} (\bE,\bPhi)=0~~\mbox{for all}~\bPhi\in W.$$
By the Riesz representation theorem there exist two bounded linear operators $\bF_\omega :W\rightarrow W$ and $\bG :W\rightarrow W$ such that
$$(\bF_\omega \bE,\bPhi)_W:=\mathcal{F}_\omega (\bE,\bPhi)~\mbox{and}~(\bG \bE,\bPhi)_W:=\mathcal{G} (\bE,\bPhi).$$

The operator $\bG $ is compact which has been shown in (Lemma 3.4, \cite{H}). Suppose $\bF_\omega $ is coercive, whose proof will be given in lemma \ref{Fcoercive}.  Then the operator $\bF_\omega -\omega ^2\bG $ is Fredholm with index zero. The transmission eigenvalues are the values of $\omega >0$ for which $\bI-\omega ^2\bF_\omega ^{-1}\bG $ has a nontrivial kernel. To apply the analytic Fredholm theorem, it remains to show that $\bI-\omega ^2\bF_\omega ^{-1}\bG $ or $\bF_\omega -\omega ^2\bG $ is injective for at least one $\omega $.

For all $\bE\in W$ we have that
\begin{multline}\label{n1}
\mathcal{F}_\omega (\bE,\bE)-\omega ^2\mathcal{G} (\bE,\bE)=\int_{\Omega\backslash \bar{D}}(n-1)^{-1}\Big(|\curl\curl \bE|^2-2\mbox{Re}\left\{\omega ^2\bE\cdot\curl\curl \bE\right\}+\omega ^4|\bE|^2\Big)\diff x\\
-\omega ^2\|\curl \bE\|^2_{L^2(\Omega)}+\omega ^4\|\bE\|^2_{L^2(\Omega)}.
\end{multline}

The Poincar\'{e} inequality (\ref{poincare}) of $\curl\bE$ gives us that:
\begin{equation} \label{eq:2}
\|\curl\bE\|^2_{L^2(\Omega)}\leq C_1\|\curl\curl\bE\|^2_{L^2(\Omega)}
\end{equation}
for all $\bE$ satisfies $\nu\times\curl \bE=0$ on $\partial\Omega$, where constant $C_1$ is independent of $\bE$. Since $\curl\curl \bE=\omega^2\bE$ in $D$ from (\ref{u2}), then (\ref{n1}) becomes
\begin{multline}\label{n2}
\mathcal{F}_\omega (\bE,\bE)-\omega ^2\mathcal{G} (\bE,\bE)
\geq\int_{\Omega\backslash \bar{D}}(n-1)^{-1}\Big(|\curl\curl \bE|^2-2\mbox{Re}\left\{\omega ^2\bE\cdot\curl\curl \bE\right\}+\omega ^4|\bE|^2\Big)\diff x\\
-\omega ^2C_1\|\curl\curl\bE\|^2_{L^2(\Omega)}+\omega ^4\|\bE\|^2_{L^2(\Omega)}\\
\geq\int_{\Omega\backslash \bar{D}}(n-1)^{-1}\Big(|\curl\curl \bE|^2-2\mbox{Re}\left\{\omega ^2\bE\cdot\curl\curl \bE\right\}+\omega ^4|\bE|^2\Big)\diff x\\
-\omega ^2C_1\|\curl\curl\bE\|^2_{L^2(\Omega\backslash\bar{D})}+\omega ^4(1-C_1\omega^2 )\|\bE\|^2_{L^2(D)}+\omega ^4\|\bE\|^2_{L^2(\Omega\backslash\bar{D})}.
\end{multline}

Denote again  $\gamma=\frac{1}{n^+-1}$, then (\ref{n2}) becomes
\begin{multline}\label{n3}
\mathcal{F}_\omega (\bE,\bE)-\omega ^2\mathcal{G} (\bE,\bE)
\geq \int_{\Omega\backslash \bar{D}}\gamma\Big((1-C_1\omega^2/\gamma)|\curl\curl \bE|^2-2\mbox{Re}\left\{\omega ^2\bE\cdot\curl\curl \bE\right\}+\omega ^4|\bE|^2\Big)\diff x\\
+\omega ^4(1-C_1\omega^2 )\|\bE\|^2_{L^2(D)}+\omega ^4\|\bE\|^2_{L^2(\Omega\backslash\bar{D})}.
\end{multline}

Suppose $1-C_1\omega^2/\gamma\geq 0$, then (\ref{n3}) becomes
\begin{multline*}
\mathcal{F}_\omega (\bE,\bE)-\omega ^2\mathcal{G} (\bE,\bE)\geq\int_{\Omega\backslash \bar{D}}\gamma\Big|\sqrt{1-C_1\omega^2/\gamma}\curl\curl \bE-\frac{1}{\sqrt{1-C_1\omega^2/\gamma}}\omega^2\bE\Big|^2\diff x\\
+(1-C_1\omega^2(1+1/\gamma))\omega ^4\|\bE\|^2_{L^2(\Omega\backslash\bar{D})}+\omega ^4(1-C_1\omega^2 )\|\bE\|^2_{L^2(D)}.
\end{multline*}
Therefore, we deduce that $\mathcal{F}_\omega (\bE,\bE)-\omega ^2\mathcal{G} (\bE,\bE)>0$ under the condition that
$$0\leq\omega^2\leq \frac{\gamma}{\gamma+1}\frac{1}{C_1},$$
and hence $\mathcal{F}_\omega -\omega ^2\mathcal{G} $ is injective. Hence, the analytical Fredholm theorem implies that the set of transmission eigenvalues is discrete and from the analyticity with $+\infty$ the only possible accumulating point.
\end{proof}

\begin{lem}
$\bF_\omega$ is coercive.
\label{Fcoercive}
\end{lem}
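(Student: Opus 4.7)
The plan is to imitate the strategy of Lemma~\ref{Acoercive}, exploiting the fact that in the present regime $n-1>0$ the weight $(n-1)^{-1}$ is still bounded below by a strictly positive constant. First I would expand $(\bF_\omega\bE,\bE)_W=\mathcal{F}_\omega(\bE,\bE)$ as
\begin{multline*}
(\bF_\omega\bE,\bE)_W = \int_{\Omega\backslash\bar{D}}(n-1)^{-1}\bigl(|\curl\curl\bE|^2 - 2\omega^2\mbox{Re}\{\bE\cdot\overline{\curl\curl\bE}\} + \omega^4|\bE|^2\bigr)\diff x \\
+\omega^2\|\curl\bE\|^2_{L^2(\Omega)}+\omega^4\|\bE\|^2_{L^2(\Omega)},
\end{multline*}
and set $n^+:=\sup_{\Omega\backslash\bar{D}}n(x)$, $\gamma:=1/(n^+-1)>0$, so that $(n-1)^{-1}\geq\gamma$ pointwise. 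Using this lower bound on the two quadratic pieces and the pointwise estimate $-\mbox{Re}\{\bE\cdot\overline{\curl\curl\bE}\}\geq -|\bE|\,|\curl\curl\bE|$ together with Cauchy--Schwarz in $L^2(\Omega\backslash\bar{D})$ on the cross term, I obtain, with $X:=\|\curl\curl\bE\|_{L^2(\Omega\backslash\bar{D})}$ and $Y:=\omega^2\|\bE\|_{L^2(\Omega\backslash\bar{D})}$,
\[
(\bF_\omega\bE,\bE)_W \geq \gamma X^2 - 2\gamma XY + (1+\gamma)Y^2 + \omega^2\|\curl\bE\|^2_{L^2(\Omega)} + \omega^4\|\bE\|^2_{L^2(D)},
\]
after splitting $\omega^4\|\bE\|^2_{L^2(\Omega)}$ into its $\Omega\backslash\bar{D}$ and $D$ contributions and absorbing the former into $(1+\gamma)Y^2$.

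Next I would apply the algebraic identity (\ref{ineq}) verbatim, choosing any $\delta\in(\gamma,\gamma+1)$ so that both $\gamma-\gamma^2/\delta$ and $1+\gamma-\delta$ are strictly positive. Dropping the resulting nonnegative square gives
\[
(\bF_\omega\bE,\bE)_W \geq \Bigl(\gamma - \frac{\gamma^2}{\delta}\Bigr)\|\curl\curl\bE\|^2_{L^2(\Omega\backslash\bar{D})} + (1+\gamma-\delta)\omega^4\|\bE\|^2_{L^2(\Omega\backslash\bar{D})} + \omega^2\|\curl\bE\|^2_{L^2(\Omega)} + \omega^4\|\bE\|^2_{L^2(D)}.
\]
Combining the two $\omega^4\|\bE\|^2$-terms on $\Omega\backslash\bar{D}$ and $D$ into a single lower bound on $\omega^4\|\bE\|^2_{L^2(\Omega)}$, each remaining summand controls the corresponding piece of $\|\bE\|^2_W=\|\bE\|^2_{L^2(\Omega)}+\|\curl\bE\|^2_{L^2(\Omega)}+\|\curl\curl\bE\|^2_{L^2(\Omega\backslash\bar{D})}$ with a constant that depends on $\omega$, $\gamma$ and $\delta$ but not on $\bE$, which is exactly $(\bF_\omega\bE,\bE)_W\geq C\|\bE\|^2_W$.

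No step is genuinely delicate; the only subtlety is that the cross term $-2\omega^2\mbox{Re}\{\bE\cdot\overline{\curl\curl\bE}\}$ is indefinite in sign and must be absorbed by Cauchy--Schwarz at the cost of part of the $|\curl\curl\bE|^2$ mass, and the completion-of-the-square identity (\ref{ineq}) is exactly what guarantees that a strictly positive coefficient still survives in front of $X^2$ after this absorption. The main bookkeeping difference from Lemma~\ref{Acoercive} is that here $\gamma$ must be taken as the lower bound $1/(n^+-1)$ of $(n-1)^{-1}$, determined by $\sup n$ rather than by $\inf n$; apart from this, the two arguments are parallel.
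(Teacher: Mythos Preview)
Your proposal is correct and follows essentially the same argument as the paper's own proof: set $\gamma=1/(n^{+}-1)$, bound $(n-1)^{-1}\geq\gamma$, apply the algebraic identity~(\ref{ineq}) with some $\delta\in(\gamma,\gamma+1)$, and read off coercivity in the $W$-norm. The only minor imprecision is in the phrasing of the cross-term step: to legitimately obtain the coefficient $-2\gamma$ on $XY$ one should apply $(n-1)^{-1}\geq\gamma$ to the full nonnegative integrand $|\curl\curl\bE-\omega^{2}\bE|^{2}$ \emph{before} expanding (rather than piece by piece, since the weighted cross term by itself has no definite sign), after which your Cauchy--Schwarz bound gives exactly the displayed inequality.
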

\begin{proof}
Taking $\bPhi=\bE\in W$, we have
\begin{multline}\label{F}
(\bF_\omega \bE,\bE)_W=\int_{\Omega\backslash \bar{D}}(n-1)^{-1}|\curl\curl \bE-\omega ^2\bE|^2\diff x+\omega ^2\|\curl \bE\|^2_{L^2(\Omega)}+\omega ^4\|\bE\|^2_{L^2(\Omega)}\\
=\int_{\Omega\backslash \bar{D}}(n-1)^{-1}\big(|\curl\curl \bE|^2-2\mbox{Re}\left\{\omega ^2\bE\cdot\curl\curl \bE\right\}+\omega ^4|\bE|^2\big)\diff x\\
+\omega ^2\|\curl \bE\|^2_{L^2(\Omega)}+\omega ^4\|\bE\|^2_{L^2(\Omega)}.
\end{multline}

Denote $n_{-}=\inf_{\Omega\backslash \bar{D}}n(x)$, and $n^{+}=\sup_{\Omega\backslash \bar{D}}n(x)$. We assume $n^+>n(x)>n_->1$ for $x\in\Omega\backslash \bar{D}$. Set $\gamma=\frac{1}{n^+-1}$. Using inequality (\ref{ineq}) we have from (\ref{F}) that
\begin{multline*}
(\bF_\omega \bE,\bE)_W\geq\gamma\|\curl\curl \bE\|^2_{L^2(\Omega\backslash \bar{D})}-2\omega ^2\gamma\|\curl\curl \bE\|_{L^2(\Omega\backslash \bar{D})} \|\bE\|^2_{L^2(\Omega\backslash \bar{D})}\\
+\omega ^4(1+\gamma)\|\bE\|^2_{L^2(\Omega\backslash \bar{D})}+\omega ^2\|\curl \bE\|^2_{L^2(\Omega)}+\omega ^4\|\bE\|_{L^2(D)}\\
\geq \left( \gamma-\frac{\gamma^2}{\delta} \right)\|\curl\curl \bE\|^2_{L^2(\Omega\backslash \bar{D})}+\omega ^4(1+\gamma-\delta)\|\bE\|^2_{L^2(\Omega\backslash \bar{D})}+\omega ^2\|\curl \bE\|^2_{L^2(\Omega)}+\omega ^4\|\bE\|^2_{L^2(D)},
\end{multline*}
where $\gamma<\delta<\gamma+1$. For such an $\delta$, we conclude that there exists a constant $C>0$ such that
$$(\bF_\omega \bE,\bE)_W\geq C\|\bE\|^2_W$$
for all $\bE\in W$ which proves that $\bF_\omega :W\rightarrow W$ is coercive.
\end{proof}

From the previous theorem, we deduce a lower bound for the first transmission eigenvalue when $n-1$ is positive in $\Omega\backslash\bar{D}$, that is, $\omega> \sqrt{\frac{\gamma}{\gamma+1}\frac{1}{C_1}}$ where $C_1$ is the constant given in \eqref{eq:2}.

\subsection{Existence of the interior transmission eigenvalues when $n-1>0$}
Suppose $n-1>0$ in $\Omega\backslash \bar{D}$, then the transmission eigenvalues are the solutions $\lambda_j(\omega )=\omega ^2$ of (\ref{eig}), $j\in\mathbb{N}$. The existence of the interior eigenvalues is given in the following theorem. The proof of the existence of transmission eigenvalues again makes use of Theorem \ref{exi}.
\begin{thm}
Assume that $n(x)>1$ for $x\in\Omega\backslash\bar{D}$. There exists an infinite discrete set of transmission eigenvalues.
\label{exip}
\end{thm}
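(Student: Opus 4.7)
The plan is to apply Theorem~\ref{exi} to the operator pair $(\bF_\omega, \bG)$ introduced in the proof of Theorem~\ref{positive}, using the parameter $\tau=\omega^2$. Its standing hypotheses are already in place: Lemma~\ref{Fcoercive} gives the continuous, self-adjoint, positive-definite family $\bF_\omega$, and $\bG$ is the self-adjoint, non-negative, compact operator identified at the start of the proof of Theorem~\ref{positive}. The first hypothesis is essentially already proved there: the strict inequality $\mathcal{F}_{\omega_0}(\bE,\bE) - \omega_0^2\mathcal{G}(\bE,\bE) > 0$ established in the course of proving Theorem~\ref{positive}, valid for every $\omega_0$ with $\omega_0^2\le \frac{\gamma}{\gamma+1}\frac{1}{C_1}$, says precisely that $\bF_{\omega_0}-\omega_0^2\bG$ is positive on $W$, so any such $\omega_0$ serves as $\tau_0=\omega_0^2$.

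For the second hypothesis, I would mimic the construction from the proof of Theorem~\ref{exim}. Given $r>0$ small, pick $M(r)$ disjoint open balls $B_r^j\Subset\Omega\setminus\bar{D}$, set $n_-:=\inf_{\Omega\setminus\bar{D}}n>1$, and let $\omega_1$ be the first interior transmission eigenvalue for each $B_r^j$ with the constant index of refraction $n_-$, whose existence on a ball with constant refractive index is known from \cite{CK}. Picking a corresponding eigenfunction $\bE_j\in H_0^2(\curl,B_r^j)$ satisfying
\[
\int_{B_r^j}\frac{1}{n_- - 1}\big(\curl\curl\bE_j - \omega_1^2 n_-\bE_j\big)\cdot\overline{\big(\curl\curl\bPhi - \omega_1^2\bPhi\big)}\diff x=0
\]
for all $\bPhi\in H_0^2(\curl,B_r^j)$, extending by zero to $\bE_j^0\in H_0^2(\curl,\Omega)\subset W$, and setting $V:=\mathrm{span}\{\bE_j^0:1\le j\le M(r)\}$ yields an $M(r)$-dimensional test subspace.

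For $\bE=\sum_j\alpha_j\bE_j^0\in V$, the disjoint supports reduce $\mathcal{F}_{\omega_1}(\bE,\bE)-\omega_1^2\mathcal{G}(\bE,\bE)$ to a sum over $j$ of $|\alpha_j|^2$ times the single-ball contribution
\[
\int_{B_r^j}(n-1)^{-1}|\curl\curl\bE_j-\omega_1^2\bE_j|^2\diff x - \omega_1^2\int_{B_r^j}|\curl\bE_j|^2\diff x + \omega_1^4\int_{B_r^j}|\bE_j|^2\diff x.
\]
Since $n\ge n_->1$ forces $(n-1)^{-1}\le(n_--1)^{-1}$ pointwise, I would replace the variable coefficient by the constant $(n_--1)^{-1}$ to produce an upper bound. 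Green's first vector identity combined with $\nu\times\bE_j=0$ on $\partial B_r^j$ (exactly as in \eqref{ex}) then gives $\int|\curl\bE_j|^2=\mathrm{Re}\int\curl\curl\bE_j\cdot\bar\bE_j$, and the ball variational identity, tested against $\bPhi=\bE_j$, rearranges the upper bound to $0$ by essentially the same algebra that reduced the analogous expression to $0$ via \eqref{exm0} in Theorem~\ref{exim}. So the second hypothesis of Theorem~\ref{exi} is verified on $V$, and Theorem~\ref{exi} produces $M(r)$ transmission eigenvalues in $[\omega_0,\omega_1]$; letting $r\to 0$ forces $M(r)\to\infty$ and yields the desired infinite sequence, with discreteness being Theorem~\ref{positive}.

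The main obstacle is the sign bookkeeping in the second hypothesis: one must use the \emph{infimum} $n_-$ rather than the supremum $n^+$ to flip the pointwise inequality the right way, and the resulting upper bound cancels to zero only because the ball eigenfunctions are selected to satisfy the variational identity with \emph{that same} constant $n_-$. Once this alignment is correct, the remainder is a direct transcription of the proof of Theorem~\ref{exim} with $\omega^4\mathcal{B}$ replaced by $\omega^2\mathcal{G}$ and $n^+$ replaced by $n_-$.
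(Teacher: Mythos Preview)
Your proposal is correct and follows essentially the same approach as the paper's own proof: apply Theorem~\ref{exi} to $(\bF_\omega,\bG)$ with $\tau=\omega^2$, take the positivity from Theorem~\ref{positive} for the first hypothesis, and for the second hypothesis pack $M(r)$ disjoint balls in $\Omega\setminus\bar D$, use the first transmission eigenvalue for each ball with constant index $n_-$, extend the eigenfunctions by zero, bound $(n-1)^{-1}\le(n_--1)^{-1}$, and collapse the expression to zero via \eqref{ex} and the ball variational identity. The paper labels the ball eigenvalue $\omega_2$ rather than $\omega_1$, but otherwise the arguments coincide.
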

\begin{proof}
The proof is exactly an analogue of Theorem \ref{exim}. Let $\tau=\omega ^2$. We have seen from Theorem \ref{positive} that if $0\leq\omega^2\leq \frac{\gamma}{\gamma+1}\frac{1}{C_1}$, then $\bF_{\omega _0}-\omega ^2_0\bG$ is positive in $W$. Hence, the first assumption in Theorem \ref{exi} is satisfied. Now we try to find $\omega _1$ such that the second assumption is also satisfied.

Let $B_r^j$, $j=1,\dots,M(r)$, be $M(r)$ balls of radius $r$ included in $\Omega\backslash \bar{D}$. We denote by $\omega _2$ the first transmission eigenvalue corresponding to the interior transmission problem for $B_r^j$ for all $j=1,\dots,M(r)$ with index of refraction $n_-$ which is known to exist \cite{CK}, where $n_{-}:=\inf_{x\in\Omega\backslash \bar{D}}n(x)$. Let $\bE_j\in H^2_{0}(\mbox{curl},B_r^j)$, $j=1,\dots,M(r)$, be the corresponding eigenvector which satisfies
\beq
\int_{B_r^j}\frac{1}{n^--1}(\curl\curl \bE_j-\omega _2^2n_-\bE_j)\cdot\overline{(\curl\curl \bPhi-\omega _2^2\bPhi})\diff x=0,
\label{exp0}
\eeq
for all $\bPhi\in H^2_0(\mbox{curl},B_r^j)$. We denote by $\bE^0_j\in H^2_0(\mbox{curl},\Omega)$ the extension of $\bE_j$ by zero to the whole of $\Omega$ and we define a $M(r)$-dimensional subspace of $W$ by $V:=\mbox{span}\{\bE^0_j,1\leq j\leq M(r)\}$. Since for $j\neq m$, $\bE^0_j$ and $\bE^0_m$ have adjoint support, for $\bE=\sum_{j=1}^{j=M(r)}\beta_j\bE^0_j\in V$, we have
\begin{multline}\label{fg}
\mathcal{F}_{\omega _2}(\bE,\bE)-\omega _2^2\mathcal{G}(\bE,\bE)\\
=\sum_{j=1}^{M(r)}|\beta_j|^2\Big(\int_{\Omega\backslash \bar{D}}(n-1)^{-1}|\curl\curl \bE^0_j-\omega _2^2\bE^0_j|^2\diff x-\omega _2^2\int_{\Omega}|\curl \bE^0_j|^2\diff x+\omega _2^4\int_{\Omega}|\bE^0_j|^2\diff x\Big)\\
=\sum_{j=1}^{M(r)}|\beta_j|^2\Big(\int_{B_r^j}(n-1)^{-1}|\curl\curl \bE_j-\omega _2^2\bE_j|^2\diff x-\omega _2^2\int_{B_r^j}|\curl \bE_j|^2\diff x+\omega _2^4\int_{B_r^j}|\bE_j|^2\diff x\Big)\\
\leq \sum_{j=1}^{M(r)}|\beta_j|^2\Big(\frac{1}{n_--1}\int_{B_r^j}|\curl\curl \bE_j-\omega _2^2\bE_j|^2\diff x-\omega _2^2\int_{B_r^j}|\curl \bE_j|^2\diff x+\omega _2^4\int_{B_r^j}|\bE_j|^2\diff x\Big).
\end{multline}

Substituting (\ref{ex}) into (\ref{fg}), together with (\ref{exp0}), we have
\begin{multline*}
\mathcal{F}_{\omega _2}(\bE,\bE)-\omega _2^2\mathcal{G}(\bE,\bE)\\
\leq\sum_{j=1}^{M(r)}|\beta_j|^2\Big(\int_{B_r^j}\frac{1}{n_--1}(\curl\curl \bE_j-\omega _2^2n_-\bE_j)\cdot\overline{(\curl\curl \bE_j-\omega _2^2\bE_j)}\diff x\Big)=0.
\end{multline*}
Hence, the second assumption in Theorem \ref{exi} is satisfied.

Thus, we conclude that there exist $M(r)$ transmission eigenvalues in $[\sqrt{\frac{\gamma}{\gamma+1}\frac{1}{C_1}},\omega_2)$. Letting $r\rightarrow 0$, we have that $M(r)\rightarrow \infty$ and thus we can now deduce that there exists an infinite set of transmission eigenvalues.
\end{proof}

\section{Interior transmission eigenvalue problem and Maxwell-Herglotz approximation}
In this section, we shall find some important applications of the interior transmission problem (\ref{u0}) to invisibility cloaking. According to the discussion in the introduction, if perfect invisibility is obtained for the scattering system (\ref{R0}), then one has the eigenfunctions for the interior transmission eigenvalue problem (\ref{u0}). However, the converse is not necessarily true unless the eigenfunctions $(\bE_0,\bH_0)$ can be smoothly extended from $\Omega$ to $\mathbb{R}^3$. Nevertheless, we shall show that near invisibility can still be achieved under certain circumstances. To this end, we first need to extend the interior transmission eigenfunctions $(\bE_0,\bH_0)\in H(\mbox{curl},\Omega)\times H(\mbox{curl},\Omega)$ in (\ref{u0}) to the whole space $\mathbb{R}^3$ by the so-called Maxwell-Herglotz approximation to form an incident electric wave field for (\ref{R0}). Define
\beq
\bE^g_\omega(x):=\int_{\mathbb{S}^2}e^{i\omega x\cdot d}g(d)\diff s(d), ~~g\in L^2(\mathbb{S}^2),~~\bH^g_\omega(x):=\frac{1}{i\omega}\curl\bE^g_\omega(x),~x\in\mathbb{R}^3.
\label{Herg}
\eeq
$\bE^g_\omega,\bH^g_\omega$ are called the Maxwell-Herglotz wave functions. We have from \cite{W} the following theorem.
\begin{thm}
Suppose $\Omega$ is a domain of class $C^{0,1}$, and $\omega$ is not a PEC eigenvalue of the Maxwell's equations in $\Omega$. Let $\mathcal{H}_{\omega}$ denote the space of all Maxwell-Herglotz functions of the form (\ref{Herg}) restricted for $x \in \Omega$, and
$$W_\omega(\Omega):=\{(\bE,\bH)\in L^2(\Omega)\times L^2(\Omega):{\rm curl} ~\bE-i\omega\bH=0,~{\rm curl}~ \bH+i\omega \bE=0\}$$
denote the space of all solutions of the Maxwell's equations in $\Omega$. Then $\mathcal{H}_\omega$ is dense in $W_\omega(\Omega)$ with respect to the norm $\|\cdot\|_{H({\rm curl}, \Omega)}$. 
\label{dense}
\end{thm}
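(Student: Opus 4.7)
The plan is to dualize via Hahn--Banach and convert the density claim into the vanishing of a certain exterior radiating Maxwell field, which will then be ruled out by Rellich's lemma together with the PEC non-resonance hypothesis. Suppose, toward contradiction, that $\mathcal{H}_\omega$ is not dense in $W_\omega(\Omega)$ under the $H(\mbox{curl},\Omega)\times H(\mbox{curl},\Omega)$ topology. By Hahn--Banach there exists a nonzero continuous linear functional $L$ on $H(\mbox{curl},\Omega)\times H(\mbox{curl},\Omega)$ that vanishes on $\mathcal{H}_\omega$ yet not on all of $W_\omega(\Omega)$, and by Riesz representation
\[
L(\bE,\bH) = \int_\Omega\bigl(\bE\cdot\bar{\bF}_0 + \curl\bE\cdot\bar{\bG}_0 + \bH\cdot\bar{\bF}_1 + \curl\bH\cdot\bar{\bG}_1\bigr)\,\diff x
\]
for some $L^2$--fields $\bF_0,\bG_0,\bF_1,\bG_1$ on $\Omega$. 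Because every Maxwell--Herglotz pair satisfies $\curl\bE^g=i\omega\bH^g$ and $\curl\bH^g=-i\omega\bE^g$, vanishing of $L$ on $\mathcal{H}_\omega$ collapses to
\[
\int_\Omega\bigl(\bE^g\cdot\overline{\mathbf{P}}+\bH^g\cdot\overline{\mathbf{Q}}\bigr)\,\diff x = 0,\qquad \overline{\mathbf{P}}:=\bar{\bF}_0-i\omega\bar{\bG}_1,\ \ \overline{\mathbf{Q}}:=\bar{\bF}_1+i\omega\bar{\bG}_0.
\]

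Next I would insert the plane-wave integrals defining $(\bE^g,\bH^g)$, exchange the order of integration by Fubini, and let the tangential density $g\in L^2_t(\mathbb{S}^2)$ vary freely; using the identity $(d\times g)\cdot\psi = -g\cdot(d\times\psi)$ this extracts a pointwise tangential identity on $\mathbb{S}^2$ between the Fourier transforms of $\overline{\mathbf{P}}$ and $\overline{\mathbf{Q}}$ evaluated on the sphere of radius $\omega$. Up to scalar multiples, those Fourier values coincide with the electric far-field patterns of the vector Helmholtz volume potentials $\mathcal{V}_P(x):=\int_\Omega\Phi_\omega(x,y)\overline{\mathbf{P}}(y)\,\diff y$ and $\mathcal{V}_Q(x):=\int_\Omega\Phi_\omega(x,y)\overline{\mathbf{Q}}(y)\,\diff y$; via the Stratton--Chu representation they can be repackaged into a single exterior radiating Maxwell pair $(\mathbf{U}_E,\mathbf{U}_H)$ in $\mathbb{R}^3\setminus\bar{\Omega}$ whose electric far-field pattern is exactly the tangential expression just obtained. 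The derived spherical identity then reads $\mathbf{U}_E^\infty\equiv 0$.

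Rellich's lemma for the Maxwell system forces $(\mathbf{U}_E,\mathbf{U}_H)\equiv 0$ in $\mathbb{R}^3\setminus\bar{\Omega}$. Passing to the interior via the jump relations of the underlying volume potentials yields a Maxwell solution in $\Omega$ with source determined by $\overline{\mathbf{P}},\overline{\mathbf{Q}}$ and with PEC trace $\nu\times\mathbf{U}_E|_{\partial\Omega}=0$; the hypothesis that $\omega$ is not a PEC eigenvalue in $\Omega$ gives uniqueness for this boundary value problem, which in turn forces $\overline{\mathbf{P}}\equiv 0$ and $\overline{\mathbf{Q}}\equiv 0$ in $\Omega$. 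Finally, for any $(\bE,\bH)\in W_\omega(\Omega)$ the Maxwell identities $\curl\bE=i\omega\bH$, $\curl\bH=-i\omega\bE$ reduce $L(\bE,\bH)$ to $\int_\Omega(\bE\cdot\overline{\mathbf{P}}+\bH\cdot\overline{\mathbf{Q}})\,\diff x=0$, contradicting the non-triviality of $L$ on $W_\omega(\Omega)$. I expect the main obstacle to lie in the middle step: translating the spherical tangential vanishing produced by the duality into the far-field pattern of a radiating \emph{Maxwell} (rather than scalar Helmholtz) field requires careful vector/tangential bookkeeping and a clean Stratton--Chu repackaging of the vector Helmholtz potentials, and the subsequent passage to the interior via jump relations must be compatible with the $C^{0,1}$ regularity of $\partial\Omega$.
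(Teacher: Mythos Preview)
The paper does not supply its own proof of this theorem; it is quoted directly from Weck~\cite{W} (``We have from \cite{W} the following theorem''), so there is no in-paper argument to compare against. Your Hahn--Banach plus Rellich strategy is the standard route to such density statements and is in the same spirit as Weck's original treatment.

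There is, however, a genuine logical slip in your penultimate step. Once Rellich kills the exterior field, continuity of the volume potentials and their curls across $\partial\Omega$ gives you that \emph{both} tangential traces $\nu\times\mathbf{U}_E$ and $\nu\times\mathbf{U}_H$ vanish on $\partial\Omega$ from the inside. Invoking uniqueness for the interior PEC problem with sources $(\overline{\mathbf{P}},\overline{\mathbf{Q}})$ does \emph{not} force those sources to vanish---uniqueness of the solution to an inhomogeneous boundary value problem says nothing about the right-hand side. The correct move is to bypass the claim $\overline{\mathbf{P}}=\overline{\mathbf{Q}}=0$ entirely: for any $(\bE,\bH)\in W_\omega(\Omega)$, pair the interior inhomogeneous Maxwell system satisfied by $(\mathbf{U}_E,\mathbf{U}_H)$ against $(\bE,\bH)$ and integrate by parts; the boundary integrals drop because both tangential traces of $(\mathbf{U}_E,\mathbf{U}_H)$ vanish, and the surviving bulk term is exactly $\int_\Omega(\bE\cdot\overline{\mathbf{P}}+\bH\cdot\overline{\mathbf{Q}})\,\diff x=L(\bE,\bH)$, delivering the contradiction directly. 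Note that in this corrected argument the PEC non-eigenvalue hypothesis is not used at that point; in Weck's treatment it enters through the trace and extension machinery required at $C^{0,1}$ regularity---precisely the obstacle you flag in your closing paragraph.
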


Starting from now and throughout the rest of the paper, we assume that $\Omega$ is of class $C^{0,1}$. In the context of the Maxwell system (\ref{u0}), an interior transmission eigenvalue $\omega$ is called a non-scattering ``energy" if the interior transmission eigenfunctions $(\bE_0,\bH_0)$ happen to be a pair of Maxwell-Herglotz fields of the form (\ref{Herg}). It can be easily shown that if $\omega$ is a non-scattering ``energy", and one uses the corresponding pair of eigenfunctions $(\bE_0,\bH_0)$ as the incident electric field to impinge on device $(\Omega;\epsilon,\mu,\sigma)$, then far-field pattern generated will be identically vanishing. Next, we shall show that every interior transmission eigenvalue $\omega$ is a nearly non-scattering ``energy".

Now we present the proof of Proposition~\ref{nearcloaking} of this paper which connects the interior transmission eigenvalue problem (\ref{u0}) to the invisibility cloaking.

\begin{proof}[Proof of Proposition~\ref{nearcloaking}]
Since $\omega\in\mathbb{R}^+$ is an interior transmission eigenvalue associated with $(\Omega\backslash\bar{D};\epsilon_m,\mu_m,0)$ and $(\bE_m,\bH_m)$, $(\bE_0,\bH_0)$ are the corresponding eigenfunctions, we see from (\ref{u}) that
\begin{equation}
\begin{cases}
\curl\curl \bE_m-\omega^2n(x) \bE_m=0\quad &~\mbox{in}~ \Omega\backslash \bar{D},\\
\curl\curl \bE_0-\omega^2 \bE_0=0\quad &~\mbox{in} ~\Omega,\\
\nu\times \bE_m =0\quad &~\mbox{on}~\partial D,\\
\nu\times \bE_m=\nu\times \bE_0,~~\nu\times\curl\bE_m=\nu\times\curl\bE_0\quad &~\mbox{on} ~\partial\Omega.
\end{cases}
\label{u1}
\end{equation} 

For any sufficiently small $\varepsilon\in\mathbb{R}^+$, by the denseness property of Maxwell-Herglotz equations, there exists $(\bE^g_{\omega},\bH^g_{\omega})$ such that
\begin{equation*}
\|\bE^g_{\omega}-\bE_0\|_{H({\rm curl}, \Omega)}<\varepsilon,~~\|\bH^g_{\omega}-\bH_0\|_{H({\rm curl}, \Omega)}<\varepsilon.
\end{equation*}

Let $(\bE^g_{\omega},\bH^g_{\omega})$ be the incident electromagnetic wave on $(\Omega\backslash\bar{D};\epsilon_m,\mu_m,0)$, and $(\bE^s_{\omega},\bH^s_{\omega})$ be the corresponding scattered wave. Then the total wave $\bE_{\omega}$ satisfy
\begin{equation*}
\begin{cases}
\curl\curl \bE_{\omega}-\omega^2n(x) \bE_{\omega}=0\quad &~\mbox{in}~ \mathbb{R}^3\backslash \bar{D},\\
\nu\times \bE_{\omega}=0\quad &~\mbox{on}~\partial D,\\
\bE_{\omega}^s \mbox{ satisfy the Silver-M$\ddot{\mbox{u}}$ller radiation condition},
\end{cases}
\label{scatter}
\end{equation*}
where $n(x)=\epsilon(x)$ in $\Omega\backslash \bar{D}$ and $n(x)=1$ in $\mathbb{R}^3\backslash \bar{\Omega}$.

Set 
\begin{equation}
\bE^s=
\begin{cases}
\bE_m-\bE_0\quad &~\mbox{in}~ \Omega\backslash \bar{D},\\
0\quad &~\mbox{in}~ \mathbb{R}^3\backslash \bar{\Omega}.
\end{cases}
\label{Es0}
\end{equation}
Then we have 
\begin{equation}
\begin{cases}
\curl\curl \bE^s-\omega^2n(x) \bE^s=\omega^2(n(x)-1)\bE_0\quad &~\mbox{in}~ \mathbb{R}^3\backslash \bar{D},\\
\nu\times \bE^s=-\nu\times \bE_0\quad &~\mbox{on}~\partial D,\\
\bE^s\mbox{ satisfy the Silver-M$\ddot{\mbox{u}}$ller radiation condition}.
\end{cases}
\label{hatE}
\end{equation}
On the other hand, 
\begin{equation}
\begin{cases}
\curl\curl \bE^s_{\omega}-\omega^2n(x)\bE^s_{\omega}=\omega^2(n(x)-1)\bE^g_{\omega}\quad &~\mbox{in}~ \mathbb{R}^3\backslash \bar{D},\\
\nu\times \bE^s_{\omega}=-\nu\times \bE^g_{\omega}\quad &~\mbox{on}~\partial D,\\
\bE^s_{\omega}\mbox{ satisfy the Silver-M$\ddot{\mbox{u}}$ller radiation condition}.
\end{cases}
\label{sE}
\end{equation}
Subtracting (\ref{sE}) from (\ref{hatE}), we have
\begin{equation}
\begin{cases}
\curl\curl (\bE^s-\bE^s_{\omega})-\omega^2n(x) (\bE^s-\bE^s_{\omega})=\omega^2(n(x)-1)(\bE_0-\bE^g_{\omega})\quad &~\mbox{in}~ \mathbb{R}^3\backslash \bar{D},\\
\nu\times (\bE^s-\bE^s_{\omega})=-\nu\times (\bE_0-\bE^g_{\omega})\quad &~\mbox{on}~\partial D,\\
\bE^s-\bE^s_{\omega}\mbox{ satisfy the Silver-M$\ddot{\mbox{u}}$ller radiation condition}.
\end{cases}
\label{Esomega}
\end{equation}

By the well-posedness of (\ref{Esomega}), (see for example, \cite{HL, LRX}), we have the following estimate
\begin{equation*}
\|\bE^s-\bE^s_{\omega}\|_{H({\rm curl}, \mathbb{R}^3\backslash \bar{D})}\leq C_1\|\bE_0-\bE^g_{\omega}\|_{H({\rm curl}, \Omega\backslash \bar{D})}+C_2\|\nu\times(\bE_0-\bE^g_{\omega})\|_{TH^{-1/2}_{{\rm Div}}(\partial D)},
\end{equation*}
where $C_1$ and $C_2$ are positive constants depending only on $\omega,\epsilon_m,\mu_m$ and $\Omega,D$. It follows from (\ref{EH}) and the Sobolev trace theorem that
$$\|\bE^s-\bE^s_{\omega}\|_{H({\rm curl}, \mathbb{R}^3\backslash \bar{D})}\leq C_3\varepsilon,$$
where $C_3$ is a positive constant depending only on $\omega,\epsilon_m,\mu_m$ and $\Omega,D$.
Consequently, one has
\beq
\|\nu\times (\bE^s-\bE_{\omega}^s)\|_{TH^{-1/2}_{{\rm Div}}(\partial\Omega)}\leq C \varepsilon,
\label{pec}
\eeq
where $C$ is a positive constant depending only on $\omega,\epsilon_m,\mu_m$ and $\Omega,D$. Therefore, by (\ref{Es0}) and the well-posedness of the scattering problem from a perfectly conducting electric obstacle, one has (\ref{nearE}). Inequality (\ref{nearH}) follows immediately from \eqref{nearE} since $\bH_{\infty}=\nu\times\bE_{\infty}$ on $\mathbb{S}^2$.
\end{proof}

Let $\mathcal{T}_{\Omega\backslash\bar{D}}\subset \mathbb{R}$ denote the set of all the interior transmission eigenvalues of (\ref{u0}) associated with $(\Omega\backslash\bar{D};\epsilon_m,\mu_m,0)$, and set
\begin{equation*}
\begin{split}
&\mathcal{W}_{\Omega\backslash\bar{D};\epsilon_m,\mu_m,0}:=\bigcup_{\omega\in\mathcal{T}_{\Omega\backslash\bar{D}}}\Big\{(\bE_0,\bH_0):(\bE_m,\bH_m),(\bE_0,\bH_0)\in H(\mbox{curl},\Omega\backslash\bar{D})\times H(\mbox{curl},\Omega) \mbox{ are a pair of} \\
&\mbox{ interior transmission eigenfunctions of (\ref{u0}) corresponding to}~\omega\mbox{  associated with }(\Omega\backslash\bar{D};\epsilon_m,\mu_m,0)\Big\}.
\end{split}
\end{equation*}
Clearly, $\mathcal{W}_{\Omega\backslash\bar{D};\epsilon_m,\mu_m,0}$ is a subspace of $W(\Omega):=\cup_{\omega\in\mathbb{R}^+}W_\omega(\Omega)$. For a sufficiently small $\varepsilon\in\mathbb{R}^+$, we let $$\mathcal{H}^\varepsilon_{\Omega\backslash\bar{D};\epsilon_m,\mu_m,0}\subset \mathcal{H}(\Omega):=\cup_{\omega\in\mathbb{R}^+}\mathcal{H}_{\omega}(\Omega)$$ be an $\varepsilon$-net of $\mathcal{W}_{\Omega\backslash\bar{D};\epsilon_m,\mu_m,0}$ in the norm $\|\cdot\|_{H(\mbox{curl},\Omega)}$. By Proposition~\ref{nearcloaking}, one clearly has that
\begin{thm}
For any $(\bE^g_{\omega},\bH^g_{\omega})\in \mathcal{H}^\varepsilon_{\Omega\backslash\bar{D};\epsilon_m,\mu_m,0}$ associated with an interior transmission eigenvalue $\omega\in\mathcal{T}_{\Omega\backslash\bar{D}}$, then there holds
\begin{equation*}
|\bE_{\infty}(\hat{x},(\bE^g_\omega,\bH^g_\omega),(\Omega\backslash\bar{D};\epsilon_m,\mu_m,0))|\leq C\varepsilon,~ |\bH_{\infty}(\hat{x},(\bE^g_\omega,\bH^g_\omega),(\Omega\backslash\bar{D};\epsilon_m,\mu_m,0))|\leq C\varepsilon,~\forall \hat{x}\in \mathbb{S}^{2},
\end{equation*}
where $C$ is a positive constant depending only on $\omega,\epsilon_m,\mu_m$ and $\Omega,D$.
\label{cloaking}
\end{thm}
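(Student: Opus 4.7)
The plan is to observe that this statement is essentially a repackaging of Proposition~\ref{nearcloaking} over the whole $\varepsilon$-net $\mathcal{H}^\varepsilon_{\Omega\backslash\bar{D};\epsilon_m,\mu_m,0}$, so the proof reduces to unpacking the definitions and quoting the earlier result.

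First, I would take an arbitrary element $(\bE^g_\omega,\bH^g_\omega)\in \mathcal{H}^\varepsilon_{\Omega\backslash\bar{D};\epsilon_m,\mu_m,0}$ associated with an interior transmission eigenvalue $\omega\in\mathcal{T}_{\Omega\backslash\bar{D}}$. By the definition of the $\varepsilon$-net and of $\mathcal{W}_{\Omega\backslash\bar{D};\epsilon_m,\mu_m,0}$, there exists a pair $(\bE_0,\bH_0)\in \mathcal{W}_{\Omega\backslash\bar{D};\epsilon_m,\mu_m,0}$, arising from interior transmission eigenfunctions $(\bE_m,\bH_m),(\bE_0,\bH_0)$ of \eqref{u0} at the eigenvalue $\omega$, such that
\begin{equation*}
\|\bE^g_\omega-\bE_0\|_{H(\mathrm{curl},\Omega)}<\varepsilon,\qquad \|\bH^g_\omega-\bH_0\|_{H(\mathrm{curl},\Omega)}<\varepsilon.
\end{equation*}
This is precisely the hypothesis \eqref{EH} of Proposition~\ref{nearcloaking}.

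Next, I would apply Proposition~\ref{nearcloaking} directly to this pair. Its conclusions \eqref{nearE}--\eqref{nearH} give
\begin{equation*}
|\bE_{\infty}(\hat{x},(\bE^g_\omega,\bH^g_\omega),(\Omega\backslash\bar{D};\epsilon_m,\mu_m,0))|\leq C\varepsilon,\qquad |\bH_{\infty}(\hat{x},(\bE^g_\omega,\bH^g_\omega),(\Omega\backslash\bar{D};\epsilon_m,\mu_m,0))|\leq C\varepsilon
\end{equation*}
uniformly in $\hat{x}\in\mathbb{S}^2$. The only point that requires a line of justification is the nature of the constant $C$: tracing through the proof of Proposition~\ref{nearcloaking}, the constant comes from the well-posedness estimate for \eqref{Esomega} together with the Sobolev trace inequality, and it therefore depends only on $\omega$, $\epsilon_m$, $\mu_m$, $\Omega$, $D$, as asserted.

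The main obstacle is essentially non-existent here, since the theorem is a direct corollary; the one place to be careful is verifying that the implicit constant in Proposition~\ref{nearcloaking} indeed depends only on the stated geometric and material data and not on the particular eigenpair or on its Herglotz approximant, which is clear from the proof sketched in the preceding section.
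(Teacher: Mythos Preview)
Your proposal is correct and matches the paper's approach exactly: the paper presents Theorem~\ref{cloaking} as an immediate consequence of Proposition~\ref{nearcloaking} (the lead-in sentence is ``By Proposition~\ref{nearcloaking}, one clearly has that''), and you have simply spelled out the unpacking of the $\varepsilon$-net definition that makes this implication explicit. There is nothing to add.
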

Therefore, by Theorem \ref{cloaking}, the cloaked PEC obstacle $D$ together with the coating $(\Omega\backslash\bar{D};\epsilon_m,\mu_m,0)$ is nearly invisible to the wave interrogation for any incident field from $\mathcal{H}^\varepsilon_{\Omega\backslash\bar{D};\epsilon_m,\mu_m,0}$ . 
\section{Isotropic invisibility cloaking with a lossy layer}
In this section, we consider a more realistic cloaking construction other than assuming that the cloaked region $D$ is insulating. The cloaking device takes a three-layer structure with a cloaked region, a lossy layer and a cloaking shell medium (see Figure \ref{fig:three_layer}). The target medium in the cloaked region can be arbitrary but regular, whereas the media in the lossy layer and the cloaking shell are both non-singular and isotropic. 

Consider the following exterior Maxwell system
\begin{equation}
\begin{cases}
\nabla\times \bE_0(x)-i\omega\bH_0(x)=0 \quad &~\mbox{in}~ \mathbb{R}^3\backslash \bar{\Omega},\\
\nabla\times \bH_0(x)+i\omega \bE_0(x)=0 \quad &~\mbox{in}~ \mathbb{R}^3\backslash \bar{\Omega},\\
\nu\times \bE_0=f \quad &~\mbox{on} ~\partial\Omega,\\
\bE_0,\bH_0\mbox{ satisfy the Silver-M$\ddot{\mbox{u}}$ller radiation condition}.
\end{cases}
\label{exe}
\end{equation}
Define the exterior boundary impedance map as
\begin{equation}
\Lambda^{e,\omega}_{\Omega}(\nu\times \bE_0|_{\partial\Omega})=\nu\times \bH_0|_{\partial\Omega}:TH^{-1/2}_{{\rm Div}}(\partial\Omega)\rightarrow TH^{-1/2}_{{\rm Div}}(\partial\Omega),
\label{mapexterior}
\end{equation}
Since the exterior scattering problem (\ref{exe}) is well-posed, one clearly has that $\Lambda^{e,\omega}_{\Omega}$ is well-defined and moreover, it is invertible.

Let $\Sigma\Subset D$ be a domain of Lipschitz class. Let $\tau\in\mathbb{R}^+$ be an asymptotically small parameter. Set
$$\epsilon_l=\alpha_1\tau^{-1}\cdot\bI_{3\times 3},~~ \mu_l=\alpha_2\tau\cdot\bI_{3\times 3},~~ \sigma_l=\alpha_3\tau^{-1}\cdot\bI_{3\times 3},$$
where $\alpha_1,\alpha_2,\alpha_3$ are constants in $\mathbb{R}^+$. Consider an electromagnetic medium configuration mentioned in the introduction section as follows:
\beq
(\mathbb{R}^3;\epsilon,\mu,\sigma) = (\Sigma;\epsilon_a,\mu_a,\sigma_a)\wedge(D\backslash\bar{\Sigma};\epsilon_l,\mu_l,\sigma_l)\wedge(\Omega\backslash\bar{D};\epsilon_m,\mu_m,0)\wedge (\mathbb{R}^3\backslash\bar{\Omega};\bI_{3\times 3},\bI_{3\times 3},0),
\label{medium}
\eeq
where $(\Sigma;\epsilon_a,\mu_a,\sigma_a)$ is a regular electromagnetic medium (see Figure \ref{fig:three_layer}). Then, we present the proof for the main Theorem \ref{invi-cloaking}. 
\begin{proof}[Proof of Theorem~\ref{invi-cloaking}]
Let us consider the scattering system (\ref{max}) corresponding to $(\mathbb{R}^3,\epsilon,\mu,\sigma)$ described in (\ref{medium}). We prove Theorem \ref{invi-cloaking} in the following four steps.

\textbf{Step 1}: It is first noted that $(\bE,\bH)\in H(\mbox{curl},\Omega)$ satisfies the following Maxwell system
\beq
\begin{cases}
\curl \bE(x)-i\omega \mu(x)\bH(x)=0,\quad &~x\in\Omega,\\
\curl \bH(x)+i\omega \epsilon(x)\bE(x)=\sigma(x)\bE(x),\quad &~x\in\Omega,
\end{cases}
\label{ss}
\eeq
Using integration by parts, one can calculate as follows
\beq
\begin{split}
\int_{\Omega}\sigma\bE\cdot\bar{\bE}\diff x&=\int_{\Omega}\big(\curl \bH+i\omega \epsilon\bE\big)\cdot\bar{\bE}\diff x\\
&=\int_{\Omega}\bH\cdot\overline{(\curl \bE})-\int_{\partial\Omega}\bH\cdot\overline{(\nu\times\bE)}\diff s+i\omega\int_{\Omega}\epsilon\bE\cdot\bar{\bE}\diff x\\
&=-i\omega \int_{\Omega}\mu\bH\cdot\bar{\bH}\diff x-\int_{\partial\Omega}\bH\cdot\overline{(\nu\times\bE)}\diff s+i\omega\int_{\Omega}\epsilon\bE\cdot\bar{\bE}\diff x.
\end{split}
\label{sss}
\eeq
By taking the real parts of both sides of (\ref{sss}), we have
\beq
\int_{\Omega}\sigma\bE\cdot\bar{\bE}\diff x=-\mbox{Re}\left\{\int_{\partial\Omega}\bH\cdot\overline{(\nu\times\bE)}\diff s\right\}.
\label{real}
\eeq
Using
$$\bE=\bE^{i,\omega}+\bE^{s,\omega},~\bH=\bH^{i,\omega}+\bH^{s,\omega},$$
and $$\nu\times\bH^{s,\omega}|_{\partial\Omega}=\Lambda^{e,\omega}_{\Omega}(\nu\times\bE^{s,\omega}|_{\partial\Omega}),$$
one can calculate that
\begin{multline}\label{s4}
\int_{\partial\Omega}\bH\cdot\overline{(\nu\times\bE)}\diff s=\int_{\partial\Omega}\bH^{i,\omega}\cdot\overline{(\nu\times\bE^{i,\omega})}\diff s+\int_{\partial\Omega} \left( \Lambda^{e,\omega}_{\Omega}(\nu\times\bE^{s,\omega}) \times \nu \right)\cdot\overline{(\nu\times\bE^{i,\omega})}\diff s\\
+\int_{\partial\Omega}\bH^{i,\omega}\cdot\overline{(\nu\times\bE^{s,\omega})}\diff s+\int_{\partial\Omega} \left( \Lambda^{e,\omega}_{\Omega}(\nu\times\bE^{s,\omega}) \times \nu \right)\cdot\overline{(\nu\times\bE^{s,\omega})}\diff s.
\end{multline}
Using integration by parts and straightforward calculations, one has
\begin{align} \label{eq:3}
\int_{\partial\Omega}\bH^{i,\omega}\cdot\overline{(\nu\times\bE^{i,\omega})}\diff s=i\omega\int_{\Omega}|\bE^{i,\omega}|^2\diff x-i\omega\int_{\Omega}|\bH^{i,\omega}|^2\diff x.
\end{align}
Clearly, we have
\beq
\int_{\Omega}\sigma\bE\cdot\bar{\bE}\diff x\geq \alpha_3\tau^{-1}\|\bE\|^2_{L^2(D\backslash\bar{\Sigma})}.
\label{s5}
\eeq
Finally, by combining (\ref{real})--(\ref{s5}) and using the fact that the skew-symmetric bilinear form
\begin{align*}
  \mathcal{B}(\mathbf{j}, \mathbf{m}) = \int_{\partial\Omega} \mathbf{j} \cdot (\mathbf{m} \times \nu) \, {\rm d}s
\end{align*}
defines a non-degenerate duality product on $TH^{-1/2}_{{\rm Div}}(\partial\Omega) \times TH^{-1/2}_{{\rm Div}}(\partial\Omega)$ \cite{CL}, we have
\begin{multline}\label{E1}
\|\bE\|^2_{L^2(D\backslash\bar{\Sigma})} \leq C\tau \Big( \|\bE^g_{\omega}\|_{H(\mbox{curl},\Omega)}\|\Lambda^{e,\omega}_{\Omega}(\nu\times\bE^{s,\omega})\|_{TH^{-1/2}_{{\rm Div}}(\partial\Omega)} \\
+\|\bH^g_{\omega}\|_{H(\mbox{curl},\Omega)}\|\nu\times\bE^{s,\omega}\|_{TH^{-1/2}_{{\rm Div}}(\partial\Omega)}
+\|\nu\times\bE^{s,\omega}\|_{TH^{-1/2}_{{\rm Div}}(\partial\Omega)}\|\Lambda^{e,\omega}_{\Omega}(\nu\times\bE^{s,\omega})\|_{TH^{-1/2}_{{\rm Div}}(\partial\Omega)} \Big),
\end{multline}

where $C$ is a positive constant depending only on $\Omega$ and $\alpha_3,\omega$.

\textbf{Step 2}: We want to show that
\beq
\|\nu\times\bE\|_{TH^{-1/2}_{{\rm Div}}(\partial D)}\leq C\|\bE\|_{L^2(D\backslash\bar{\Sigma})},
\label{E2}
\eeq
where $C$ depends only on $\Omega$ and $\alpha_3,\omega$. We shall make use of the following duality relation\cite{BL},
\beq
\|\nu\times\bE\|_{TH^{-1/2}_{{\rm Div}}(\partial D)}=\sup_{\|\bm{\psi}\|_{TH^{-1/2}_{{\rm Curl}}(\partial\Omega)}\leq 1}\Big|\int_{\partial\Omega}(\nu\times\bE)\cdot \bm{\psi}\diff s\Big|,
\label{E3}
\eeq
where
$$TH^{-1/2}_{{\rm Curl}}(\partial D):=\big\{ U\in TH^{-1/2}(\partial D):{\rm Curl}\, U\in H^{-1/2}(\partial D)\big\}.$$
For any $\bm{\psi} \in TH^{-1/2}_{{\rm Curl}}(\partial D)$, there exists $\bF\in H^2(\mbox{curl},D)$ such that (see Lemma 3.5 in \cite{BLZ})
\begin{enumerate}
\item $\nu\times \bF=0$ on $\partial D$;
\item $\nu\times\nu\times\curl \bF=\nu\times\nu\times \bm{\psi}$ on $\partial D$;
\item $\|\bF\|_{H^2(\mbox{curl},D)}\leq C\|\bm{\psi}\|_{TH^{-1/2}_{{\rm Curl}}(\partial D)}$, where $C$ depends only on $D$;
\item $\bF=0$ in $\Sigma$.
\end{enumerate}
By virtue of the duality relation (\ref{E3}) and using the auxiliary function $\bF$, along with the integration by parts, we have
\beq
\begin{split}
\int_{\partial D}(\nu\times\bE)\cdot \psi\diff s&=\int_{\partial D}(\nu\times\bE)\cdot \curl \bF \diff s\\
&=\int_{\partial D} (\nu\times\bE)\cdot \curl \bF \diff s - \int_{\partial D} (\nu\times\bF)\cdot \curl \bE \diff s\\
&=\int_{D}(\curl\curl \bE) \cdot \bF \diff x-\int_{D}(\curl\curl \bF)\cdot \bE \diff x.
\end{split}
\label{EF}
\eeq
Noting that in $D\backslash\bar{\Sigma}$, one has
\begin{equation*}
\begin{cases}
\curl \bE(x)-i\omega \alpha_2\tau\bH(x)=0,\quad &~x\in D\backslash\bar{\Sigma},\\
\curl \bH(x)+i\omega \alpha_1\tau^{-1}\bE(x)=\alpha_3\tau^{-1}\bE(x),\quad &~x\in D\backslash\bar{\Sigma},
\end{cases}
\end{equation*}
one has by direct verifications that
\beq
\curl\curl \bE=i\omega\alpha_2(\alpha_3-i\omega\alpha_1)\bE~~\mbox{in}~D\backslash\bar{\Sigma}.
\label{cc}
\eeq
Plug (\ref{cc}) into (\ref{EF}), one has
\beq
\begin{split}
\Big|\int_{\partial D}(\nu\times\bE)\cdot \bm{\psi} \diff s\Big|&=\left|i\omega\alpha_2(\alpha_3-i\omega\alpha_1)\int_{D}\bE\cdot \bF\diff x - \int_{D}\bE\cdot(\curl\curl \bF)\diff x \right|\\
&\leq C\|\bE\|_{L^2(D)}\|\bF\|_{H^2(\mbox{curl},D)}\\
&\leq C\|\bE\|_{L^2(D)}\|\bm{\psi}\|_{TH^{-1/2}_{{\rm Curl}}(\partial D)},
\end{split}
\label{EP}
\eeq
where $C$ depends only on $\alpha_1,\alpha_2,\alpha_3$ and $D$. Finally, by combining (\ref{E3}) and (\ref{EP}), one immediately has (\ref{E2}). 

\textbf{Step 3}: By (\ref{E1}) and (\ref{E2}) and the boundedness of the map $\Lambda_\Omega^{e,w}$, we have
\begin{multline}\label{EE}
\|\nu\times\bE\|_{TH^{-1/2}_{{\rm Div}}(\partial D)}
\leq C\tau^{1/2} \Big( \|\bE^g_{\omega}\|^{1/2}_{H(\mbox{curl},\Omega)}\|\Lambda^{e,\omega}_{\Omega}(\nu\times\bE^{s,\omega})\|^{1/2}_{TH^{-1/2}_{{\rm Div}}(\partial\Omega)}\\
+\|\bH^g_{\omega}\|^{1/2}_{H(\mbox{curl},\Omega)}\|\nu\times\bE^{s,\omega}\|^{1/2}_{TH^{-1/2}_{{\rm Div}}(\partial\Omega)}
+\|\nu\times\bE^{s,\omega}\|^{1/2}_{TH^{-1/2}_{{\rm Div}}(\partial\Omega)}\|\Lambda^{e,\omega}_{\Omega}(\nu\times\bE^{s,\omega})\|^{1/2}_{TH^{-1/2}_{{\rm Div}}(\partial\Omega)} \Big) \\
\leq C_1\tau^{1/2} \Big( \|\bE^g_{\omega}\|_{H(\mbox{curl},\Omega)}+\|\bH^g_{\omega}\|_{H(\mbox{curl},\Omega)}+\|\nu\times\bE^{s,\omega}\|_{TH^{-1/2}_{{\rm Div}}(\partial\Omega)} \Big),
\end{multline}
where $C_1$ is another constant depending only on $\Omega,D$ and $\omega,\epsilon,\mu$ and $\sigma$.

Let $\omega\in\mathbb{R}^+$ be an interior transmission eigenvalue associated with $(\Omega\backslash\bar{D};\epsilon_m,\mu_m,0)$. It is easily seen that
\begin{equation*}
\begin{cases}
\curl \bE-i\omega\mu\bH=0 \quad &~\mbox{in}~ \mathbb{R}^3\backslash \bar{D},\\
\curl \bH+i\omega\epsilon \bE=0 \quad &~\mbox{in}~ \mathbb{R}^3\backslash \bar{D},\\
\nu\times \bE= \nu \times \bE \quad &~\mbox{on}~ \partial D,\\
\bE=\bE^{s,\omega}+\bE^g_{\omega},~\bH=\bH^{s,\omega}+\bH^g_{\omega} \quad &~\mbox{in}~ \mathbb{R}^3\backslash \bar{D},\\
\bE^{s,\omega},\bH^{s,\omega} \mbox{ satisfy the Silver-M$\ddot{\mbox{u}}$ller radiation condition},
\end{cases}
\end{equation*}
where $\mu:=\mu_m\chi(\Omega\backslash \bar{D})+1\chi(\mathbb{R}^3\backslash \bar{\Omega})$, $\epsilon:=\epsilon_m\chi(\Omega\backslash \bar{D})+1\chi(\mathbb{R}^3\backslash \bar{\Omega})$.

We also introduce $\bE^{s,\omega}_c=\bE_c-\bE^g_{\omega}$ and $\bH^{s,\omega}_c=\bH_c-\bH^g_{\omega}$ satisfying
\begin{equation*}
\begin{cases}
\curl \bE_c-i\omega\mu\bH_c=0 \quad &~\mbox{in}~ \mathbb{R}^3\backslash \bar{D},\\
\curl \bH_c+i\omega\epsilon \bE_c=0 \quad &~\mbox{in}~ \mathbb{R}^3\backslash \bar{D},\\
\nu\times \bE_c=0 \quad &~\mbox{on}~ \partial D,\\
\bE_c=\bE^{s,\omega}_c+\bE^g_{\omega},~\bH_c=\bH_c^{s,\omega}+\bH^g_{\omega} \quad &~\mbox{in}~ \mathbb{R}^3\backslash \bar{D},\\
\bE^{s,\omega}_c,\bE^{s,\omega}_c\mbox{ satisfy the Silver-M$\ddot{\mbox{u}}$ller radiation condition}.
\end{cases}
\end{equation*}
Set
$$\hat{\bE}=\bE-\bE_c~\mbox{and }\hat{\bH}=\bH-\bH_c~~\mbox{in }  \mathbb{R}^3\backslash \bar{D}.$$
Then $(\hat{\bE},\hat{\bH})$ satisfies
\begin{equation*}
\begin{cases}
\curl \hat{\bE}-i\omega\mu\hat{\bH}=0\quad &~\mbox{in}~ \mathbb{R}^3\backslash \bar{D},\\
\curl \hat{\bH}+i\omega\epsilon \hat{\bE}=0 \quad &~\mbox{in}~ \mathbb{R}^3\backslash \bar{D},\\
\nu\times \hat{\bE}=\nu \times \bE \quad &~\mbox{on}~ \partial D,\\
\hat{\bE},\hat{\bH}\mbox{ satisfy the Silver-M$\ddot{\mbox{u}}$ller radiation condition}.
\end{cases}
\end{equation*}
By Lemma \ref{lem} in the following, we have
$$\|\nu\times \hat{\bE}\|_{TH^{-1/2}_{{\rm Div}}(\partial \Omega)}\leq C\|\nu \times \bE\|_{TH^{-1/2}_{{\rm Div}}(\partial D)},$$
where $C$ is a positive constant depending only on $D$, $\Omega$ and $\omega$, $\mu,\epsilon,\sigma$.

Note that  $$\hat{\bE}=\bE-\bE_c=\bE^{s,\omega}-\bE_c^{s,\omega}~~\mbox{and }\hat{\bH}=\bH-\bH_c=\bH^{s,\omega}-\bH^{s,\omega}_c~~\mbox{in }  \mathbb{R}^3\backslash \bar{D}.$$ Then there is
\begin{align*}
\|\nu\times \bE^{s,\omega}-\nu\times\bE^{s,\omega}_c\|_{TH^{-1/2}_{{\rm Div}}(\partial \Omega)}
=\|\nu\times \hat{\bE}\|_{TH^{-1/2}_{{\rm Div}}(\partial \Omega)}\leq C\|\nu \times \bE\|_{TH^{-1/2}_{{\rm Div}}(\partial D)},
\end{align*}
which in turn implies that
\beq
\|\nu\times \bE^{s,\omega}\|_{TH^{-1/2}_{{\rm Div}}(\partial \Omega)}\leq \|\nu\times\bE^{s,\omega}_c\|_{TH^{-1/2}_{{\rm Div}}(\partial \Omega)}+C\|\nu \times \bE\|_{TH^{-1/2}_{{\rm Div}}(\partial D)}.
\label{sw0}
\eeq
By the argument in the proof of Theorem \ref{nearcloaking}, we see that
\beq
\|\nu\times\bE^{s,\omega}_c\|_{TH^{-1/2}_{{\rm Div}}(\partial \Omega)}\leq C\varepsilon.
\label{sw0c}
\eeq
By applying (\ref{EE}), (\ref{sw0c}) to (\ref{sw0}), we have
\beq
\begin{split}
&\|\nu\times \bE^{s,\omega}\|_{TH^{-1/2}_{{\rm Div}}(\partial \Omega)} \\
&\leq C \left( \epsilon+\tau^{1/2}\|\bE^g_{\omega}\|_{H(\mbox{curl},\Omega)}+\tau^{1/2}\|\bH^g_{\omega}\|_{H(\mbox{curl},\Omega)}+\tau^{1/2}\|\nu\times\bE^{s,\omega}\|_{TH^{-1/2}_{{\rm Div}}(\partial\Omega)} \right)
\end{split}
\label{Es}
\eeq
for some constant independent of $\varepsilon$ and $\tau$. For sufficiently small $\tau$, we obviously have from (\ref{Es}) that
\beq
\|\nu\times \bE^{s,\omega}\|_{TH^{-1/2}_{{\rm Div}}(\partial\Omega)}\leq C\Big(\varepsilon+\tau^{1/2}\|\bE^g_{\omega}\|_{H(\mbox{curl},\Omega)}+\tau^{1/2}\|\bH^g_{\omega}\|_{H(\mbox{curl},\Omega)}\Big),
\label{Es1}
\eeq
where $C$ is a constant independent of $\varepsilon,\tau, (\bE^g_{\omega},\bH^g_{\omega})$ and $\epsilon_a,\mu_a,\sigma_a$.

Finally, by the well-posedness of the electromagnetic scattering problem, one readily has (\ref{Einvi}) and (\ref{Hinvi}) from (\ref{Es1}). 

\end{proof}

The following lemma is crucial in the proof of the above theorem.
\begin{lem}
Let $(\mathbb{R}^3\backslash\bar{D};\epsilon_m,\mu_m,0)$ be the one described in (\ref{medium}). Let $(\hat{\bE},\hat{\bH})\in H(\rm{curl},\mathbb{R}^3\backslash\bar{D})$ be the unique solution to
\begin{equation*}
\begin{cases}
\curl \hat{\bE}-i\omega\mu\hat{\bH}=0 \quad &~{\rm in}~ \mathbb{R}^3\backslash \bar{D},\\
\curl \hat{\bH}+i\omega\epsilon \hat{\bE}=0 \quad &~{\rm in}~ \mathbb{R}^3\backslash \bar{D},\\
\nu\times \hat{\bE}=f \quad &~{\rm in}~ \partial D,\\
\hat{\bE},\hat{\bH}\mbox{ satisfy the Silver-M$\ddot{\mbox{u}}$ller radiation condition}.
\end{cases}
\end{equation*}
Then there holds
\beq
\|\nu\times \hat{\bE}\|_{TH^{-1/2}_{{\rm Div}}(\partial \Omega)}\leq C\|f\|_{TH^{-1/2}_{{\rm Div}}(\partial D)},
\label{otod}
\eeq
where $C$ is positive constant independent of $f$.
\label{lem}
\end{lem}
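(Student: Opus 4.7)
The plan is to factor the desired bound through the well-posedness of the exterior boundary value problem combined with a standard tangential-trace estimate. Specifically, the system for $(\hat{\bE},\hat{\bH})$ is the exterior Maxwell scattering problem for the penetrable inhomogeneity $(\Omega\setminus\bar D;\epsilon_m,\mu_m,0)$ embedded in free space, exterior to the obstacle $D$, with prescribed tangential Dirichlet datum $f$ on $\partial D$ and the Silver-M\"uller radiation condition at infinity. Assuming this problem is well-posed at the wavenumber $\omega$, one obtains a bounded solution operator
\[
\mathcal{S}\colon TH^{-1/2}_{{\rm Div}}(\partial D)\longrightarrow H({\rm curl},B_R\setminus\bar D),\qquad f\mapsto \hat{\bE},
\]
where $B_R$ is any ball containing $\bar\Omega$. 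Composing with the continuous tangential-trace map
\[
\gamma_t\colon H({\rm curl},\Omega\setminus\bar D)\longrightarrow TH^{-1/2}_{{\rm Div}}(\partial\Omega),\qquad \bE\mapsto \nu\times\bE|_{\partial\Omega},
\]
yields \eqref{otod} with $C=\|\gamma_t\|\cdot\|\mathcal{S}\|$.

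Hence the bulk of the work is to establish the well-posedness estimate $\|\hat{\bE}\|_{H({\rm curl},B_R\setminus\bar D)}\le C\|f\|_{TH^{-1/2}_{{\rm Div}}(\partial D)}$. Here I would follow the now-standard variational route: truncate the exterior at $\partial B_R$, replace the radiation condition by the exterior electric-to-magnetic Calder\'on operator (the map $\Lambda^{e,\omega}_{B_R}$ introduced in \eqref{mapexterior}), and rewrite the problem as a sesquilinear identity on $H({\rm curl},B_R\setminus\bar D)$ with inhomogeneous tangential Dirichlet datum $f$ on $\partial D$. Using the compactness of the embedding of the space of divergence-controlled fields into $L^2$ (the Weber-Weck-type compactness, already invoked in Lemma \ref{Bcompact}) together with a G\r{a}rding-type inequality produces a Fredholm formulation. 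The Fredholm alternative then reduces the a priori bound to the injectivity of the homogeneous problem at $\omega$.

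The main obstacle, and the only genuinely nontrivial step, is this injectivity: I must show that no nontrivial outgoing solution exists with $\nu\times\hat{\bE}=0$ on $\partial D$. The standard argument is to apply Rellich's lemma in the free-space region $\mathbb{R}^3\setminus\bar\Omega$, where $\hat{\bE}$ satisfies the homogeneous Maxwell equations with unit coefficients and vanishing far-field, forcing $\hat{\bE}\equiv 0$ there; then by the transmission conditions across $\partial\Omega$ and unique continuation for the Maxwell system inside the regular medium $(\Omega\setminus\bar D;\epsilon_m,\mu_m,0)$ (cf.\ Leis/Nguyen under the uniform-ellipticity and Lipschitz hypotheses of the paper), $\hat{\bE}$ vanishes in $\Omega\setminus\bar D$ as well. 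Since the interior transmission eigenvalues of (\ref{u0}) are a priori distinct from the PEC resonances of the exterior obstacle problem (the latter being empty for a genuine scattering configuration by precisely this unique-continuation argument), injectivity holds at our chosen $\omega$, and the estimate \eqref{otod} follows.
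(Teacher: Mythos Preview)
Your argument is correct, but it follows a genuinely different route from the paper's own proof. The paper proceeds via boundary integral equations in the Colton--Kress style: it represents $\hat{\bE}$ as the curl of a single-layer potential on $\partial D$ with unknown density $\varphi\in TH^{-1/2}_{\rm Div}(\partial D)$ plus a volume potential $\mathcal{V}_N\hat{\bE}$ accounting for the inhomogeneity $n-1$ in $\Omega\setminus\bar D$, and then sets up a coupled system for $(\hat{\bE},\varphi)$ of the form ``invertible $+$ compact''. Injectivity is reduced to invertibility of $\tfrac12 I+\mathcal{M}_D$, which holds provided $\omega$ is not an interior PEC eigenvalue of $D$; the trace bound on $\partial\Omega$ then follows directly from the integral representation. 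Your variational/DtN-truncation approach is equally valid and arguably cleaner in the $H({\rm curl})$ framework: it avoids the layer-potential machinery and the auxiliary assumption that $\omega$ avoid the interior Maxwell spectrum of $D$, trading it for a unique-continuation step across $\partial\Omega$ into the regular isotropic medium. One small point to tighten: in your injectivity paragraph you invoke ``vanishing far-field'' before Rellich, but you should first note that the PEC condition $\nu\times\hat{\bE}=0$ on $\partial D$ together with the energy flux identity forces the far-field to vanish; only then does Rellich apply in $\mathbb{R}^3\setminus\bar\Omega$.
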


\begin{proof}
By simple calculations, we get the following identity in terms of $\hat\bE$:
$$\curl\curl\hat{\bE}-\omega^2n(x)\hat{\bE}=0~\mbox{in}~ \mathbb{R}^3\backslash \bar{D},$$
where $n(x):=\epsilon(x)\mu(x)>0$. Then, we have
$$\curl\curl\hat{\bE}-\omega^2\hat{\bE}=\omega^2(n-1)\hat{\bE}~\mbox{in}~ \mathbb{R}^3\backslash \bar{D}.$$

We seek the solution in the form of the electromagnetic field of an electric dipole distribution. 

Suppose there exists density function $\varphi(x)\in TH^{-1/2}_{{\rm Div}}(\partial D)$ such that $\hat{\bE}$ can be represented as
\beq
\hat{\bE}(x)=\curl_x\int_{\partial D}\varphi(y)\Phi(x,y)\diff s(y)+\int_{\Omega\backslash \bar{D}}\omega^2(n-1)\hat{\bE}(y)\Phi(x,y)\diff x,~x\in\Omega\backslash \bar{D},
\label{rep}
\eeq
where $\Phi$ is the fundamental solution for Helmholtz equation.
We introduce the single layer operator
$$
\mathcal{S}_D\varphi(x):=\int_{\partial D}\Phi(x,y)\varphi(y)\diff s(y),~x\in\mathbb{R}^3,
$$
and the boundary operator
$$
\mathcal{M}_D\varphi(x):=\nu(x)\times\curl_x\int_{\partial D}\Phi(x,y)\varphi(y)\diff s(y),~x\in\partial D.
$$
Then we have the following jump relation
$$\nu(x)\times\curl_x\mathcal{S}_D|_{\pm}\varphi(x)=\mathcal{M}_D\varphi(x)\pm\frac{1}{2}\varphi(x),~x\in\partial D.$$

Define
$$\mathcal{V}_N\varphi(x):=\int_{\mathbb{R}^3\backslash \bar{D}}\omega^2(n-1)\varphi(y)\Phi(x,y)\diff y,~x\in \mathbb{R}^3\backslash \bar{D}.$$
Then we have the following system with unknowns $(\hat{\bE},\varphi)\in  H(\mbox{curl},\Omega\backslash\bar{D})\times TH^{-1/2}_{{\rm Div}}(\partial D)$:
\beq
\begin{cases}
\hat{\bE}(x)&=\curl_x\mathcal{S}_D\varphi(x)+\mathcal{V}_N\hat{\bE}(x),~x\in\Omega\backslash \bar{D},\\
f(x)&=\mathcal{M}_D\varphi(x)+\frac{1}{2}\varphi(x)+\nu(x)\times\mathcal{V}_N\hat{\bE}(x),~x\in\partial D.
\end{cases}
\label{sys}
\eeq
We write system (\ref{sys}) in the matrix-vector notation as
\begin{equation*}
\begin{bmatrix}
\bI&0\\
2\nu\times\mathcal{V}_N&\bI
\end{bmatrix}
\begin{bmatrix}
\hat{\bE}\\
\varphi
\end{bmatrix}+\begin{bmatrix}
-\mathcal{V}_N & -\curl_x\mathcal{S}_D\\
0&2\mathcal{M}_D
\end{bmatrix}\begin{bmatrix}
\hat{\bE}\\
\varphi
\end{bmatrix}=\begin{bmatrix}
0\\
2f
\end{bmatrix},
\end{equation*}
where $\bI$ is the identity operator. The matrix operator
\begin{equation*}
\begin{bmatrix}
\bI&0\\
2\nu\times\mathcal{V}_N&\bI
\end{bmatrix}
\end{equation*}
is invertible, and has inverse
\begin{equation*}
\begin{bmatrix}
\bI&0\\
-2\nu\times\mathcal{V}_N&\bI
\end{bmatrix}.
\end{equation*}
All the entries of
\begin{equation*}
\begin{bmatrix}
-\mathcal{V}_N & -\curl_x\mathcal{S}_D\\
0&2\mathcal{M}_D
\end{bmatrix}
\end{equation*}
are compact(see section 6.3, \cite{CK}). Hence, we can apply the Riesz-Fredholm theory to (\ref{sys}). For this purpose, suppose $\hat{\bE}$ and $\varphi$ are a solution to (\ref{sys}) with $f=0$ on $\partial D$. By Theorem 6.11 in \cite{CK}, it is easy to see that $\hat{\bE}=0$ in $\mathbb{R}^3\backslash\bar{D}$. Then we have
\begin{equation*}
0=\mathcal{M}_D\varphi(x)+\frac{1}{2}\varphi(x),~x\in\partial D.
\end{equation*}
Hence, by Riesz-Fredholm theory, it is sufficient to show that
$$ \left( \frac{1}{2}\bI+\mathcal{M}_D \right) \varphi(x)=0,~\varphi\in TH^{-1/2}_{{\rm Div}}(\partial D)$$
has only trivial solution, i.e., $\varphi=0$. Since the null space of the operator $\frac{1}{2}\bI+\mathcal{M}_D$ corresponds to solutions of the homogeneous interior Maxwell problem(see Theorem 4.23, \cite{CK2}). If $\omega$ is not an interior Maxwell eigenvalue, then $\frac{1}{2}\bI+\mathcal{M}_D$ is invertible. Therefore, we have $\varphi=0$.

Finally, by the integral representation (\ref{rep}), direct calculations show (\ref{otod}). The proof is done.

\end{proof}

\section{Conclusion}

In this paper, we consider a novel interior transmission eigenvalue problem associated with the Maxwell system, where the inhomogeneous EM medium contains a PEC obstacle. This is mainly motivated by the study on invisibility cloaking from the EM wave probing. In certain practical scenarios, we establish the existence and discreteness of the transmission eigenvalues and eigenfunctions of the interior transmission eigenvalue problem. We would like to emphasize that our study in this aspect is not exclusive, and the existence and discreteness of the interior transmission eigenvalues and eigenfunctions may hold in other scenarios, which is definitely worth of further investigation. Using the Maxwell-Herglotz approximation to the interior transmission eigenfunctions derived previously, we can generate a set of nearly non-scattering waves corresponding to a PEC obstacle coated with a layer of regular isotropic EM medium. Finally, by introducing a deliberately designed lossy layer, we can construct a novel cloaking device that takes a three-layered structure. The innermost core is the cloaked region, where the cloaked object can be arbitrary; the outermost layer is the cloaking region, and the lossy layer lies in between the cloaking and cloaked regions. The mediums inside the cloaking region and the lossy layer are both regular and isotropic. The nearly cloaking effect is only achieved for waves from the nearly non-scattering set generated above, and we sharply quantify the cloaking performances.  

The work of Jingzhi Li was supported by the NSF of China (No.~11571161) and the Shenzhen Sci-Tech (No.~JCYJ20160530184212170). The work of Hongyu Liu was supported by the FRG fund from Hong Kong Baptist University, the Hong Kong RGC grant (No.~12302415), and the NSF of China (No.~11371115). The work of Yuliang Wang was supported by the Hong Kong RGC grant (No.~12328516) and the NSF of China (No.~11601459). 


    \end{document}